\def\G{\Gamma}         
\newcommand{\code}[1]{\textsc{#1}}
\newcommand{\U}{\mathcal{U}}
\newcommand{\OPT}{\mathit{OPT}}
\newcommand{\tempone}{u-red}
\begin{document}
\title{Computing Minimum Spanning Trees with Uncertainty}
\author[lab1]{T. Erlebach}{Thomas Erlebach}

\address[lab1]{Department of Computer Science, University of Leicester, UK.}
\email{{te17,mh55,rr29}@mcs.le.ac.uk}


\author[lab1]{M. Hoffmann}{Michael Hoffmann}

\author[lab2]{D. Krizanc}{Danny Krizanc}

\address[lab2]{Department of Mathematics and Computer Science, Wesleyan University, USA.}
\email{dkrizanc@wesleyan.edu}

\author[lab3]{M. Mihal'\'{a}k}{Mat\'{u}\v{s} Mihal'\'{a}k}

\address[lab3]{Institut f\"{u}r Theoretische Informatik, ETH
  Z\"{u}rich, Switzerland.}
\email{matus.mihalak@inf.ethz.ch}

\author[lab1]{R. Raman}{Rajeev Raman}

\keywords{Algorithms and data structures; Current challenges: mobile and net computing}

\thanks{Part of the presented research was undertaken while the second
  and fifth author were on study leave from the University of
  Leicester and the second author was visiting the University of
  Queensland. The authors would like to thank these institutions for
  their support.}

\begin{abstract}
We consider the minimum spanning tree problem in a setting
where information about the edge weights of the given graph
is uncertain. Initially, for each edge $e$ of the
graph only a set $A_e$, called an \emph{uncertainty area}, that
contains the actual edge weight $w_e$ is known. The algorithm
can `update' $e$ to obtain the edge weight $w_e \in A_e$.
The task is to output the edge set of a minimum spanning
tree after a minimum number of updates. An algorithm
is $k$-update competitive if it makes at most $k$
times as many updates as the optimum.  
We present a $2$-update competitive algorithm if all 
areas $A_e$ are open or trivial, which is the
best possible among deterministic algorithms. The 
condition on the areas $A_e$ is to exclude degenerate
inputs for which no constant update
competitive algorithm can exist. 

Next, we consider a setting where the vertices of the graph correspond
to points in Euclidean space
and the weight of an edge is equal to the distance
of its endpoints. The location of each point is initially
given as an uncertainty area,
and an update reveals the exact location of the point.
We give a general relation between the edge uncertainty
and the vertex uncertainty versions of a problem
and use it to derive a $4$-update competitive algorithm
for the minimum spanning tree problem in the vertex
uncertainty model. Again, we show that this is best
possible among deterministic algorithms.
\end{abstract}

\maketitle

\stacsheading{2008}{277-288}{Bordeaux}
\firstpageno{277}

\section{Introduction}
In many applications one has to deal with computational
problems where some parts of the input data are imprecise
or uncertain. For example, in a geometric problem involving
sets of points, the locations of the points might be known
only approximately; effectively this means that instead
of the location of a point, only a region or area containing
that point is known. In other applications, only estimates
of certain input parameters may be known, for example in
form of a probability distribution. There are many different
approaches to dealing with problems of this type, including
e.g.\ stochastic optimization and robust optimization.

Pursuing a different approach, we consider a setting in which
the algorithm can obtain exact information about an input data item
using an \emph{update} operation, and we are interested
in the \emph{update complexity} of an algorithm, i.e., our goal
is to compute a correct solution using a minimum number of
updates. The updates are adaptive, i.e., one selects the
next item to update based on the result of the updates
performed so far, so we refer to the algorithm as an \emph{on-line
algorithm}. There are a number of application areas where
this setting is meaningful. For example, in a mobile
ad-hoc network an algorithm may have knowledge about
the approximate locations of all nodes, and it is
possible (but expensive) to find out the exact current
location of a node by communicating to that node and
requesting that information. To assess the performance
of an algorithm, we compare the number of updates
that the algorithm makes to the optimal number of updates.
Here, optimality is defined in terms of an
adversary, who, knowing the values of
all input parameters, makes the fewest updates
needed to present a solution to the problem that is
provably correct, in that no additional areas need
to be updated to verify the correctness of the solution
claimed by the adversary.
We say that an algorithm is \emph{$k$-update competitive}
if, for each input instance, the algorithm makes at most $k$ times as
many updates as the optimum number of updates for that input instance.
The notions of update complexity and $k$-update
competitive algorithms were implicit
in Kahan's model for data in motion~\cite{Kahan/91}
and studied further for two-dimensional geometric
problems by Bruce et al.\ \cite{Bruce_et_al/05}.

In this paper, we consider
the classical minimum spanning tree (MST) problem in two
settings with uncertain information. In the first setting,
the edge weights are initially given as uncertainty
areas, and the algorithm can obtain the exact weight
of an edge by updating the edge. If the uncertainty
areas are trivial (i.e., contain a single number)
or (topologically) open, we give a $2$-update competitive algorithm
and show that this is best possible for deterministic
algorithms.  Without this restriction on the areas,
it is easy to construct degenerate inputs for which 
there is no constant update competitive algorithm.
Although degeneracy could also be excluded by other
means (similar to the ``general position'' assumption
in computational geometry), our condition is much cleaner.

In the second setting that
we consider, the vertices of the graph correspond
to points in Euclidean space, and the locations of the points are
initially given as uncertainty areas. The weight
of an edge equals the distance between the
points corresponding to its vertices.
The algorithm can update a vertex to reveal its
exact location. We give a general relation between
the edge uncertainty version and the vertex uncertainty
version of a problem. For trivial or open uncertainty
areas we obtain a $4$-update competitive algorithm
for the MST problem with vertex
uncertainty and show again that this is optimal for
deterministic algorithms.

\smallskip
\noindent
\textbf{Related Work.\ }
We do not attempt to survey the vast literature dealing
with problems on uncertain data, but focus on work most
closely related to ours. Kahan \cite{Kahan/91} studied the
problem of finding the maximum, the median and the minimum gap
of a set of $n$ real values constrained to fall in
a given set of $n$ real intervals. In the spirit of
competitive analysis, he defined the \emph{lucky ratio}
of an update strategy as the worst-case ratio between the
number of updates made by the strategy and the optimal
number of updates of a non-deterministic strategy.
In our terminology, a strategy with lucky ratio $k$
is $k$-update competitive. Kahan gave strategies with
optimal lucky ratios for the problems considered \cite{Kahan/91}.

Bruce et al.\ studied the problems of computing maximal points or the
points on the convex hull of a set of uncertain points
\cite{Bruce_et_al/05} and presented $3$-update competitive
algorithms. They introduced a general method, called
the \emph{witness algorithm}, for dealing with problems
involving uncertain data, and derived their $3$-update
competitive algorithms using that method.
The algorithms we present in this paper are based on the
method of the witness algorithm of \cite{Bruce_et_al/05},
but the application to
the MST problem is non-trivial.

Feder et al.\ \cite{Feder_median/03,Feder_shortest/07},
consider two problems in a similar framework to ours.
Firstly, they consider the problem of computing the median
of $n$ numbers to within a given tolerance.
Each input number lies in an interval, and an update reveals
the exact value, but different intervals have different update
costs.  They consider off-line algorithms, which must decide
the sequence of updates prior to seeing the answers, as well
as on-line ones, aiming to minimize the total update cost. 
In \cite{Feder_shortest/07}, off-line algorithms
for computing the length of a shortest path from a 
source $s$ to a given vertex $t$ are considered.  Again,
the edge lengths lie in intervals with different update costs, and 
they study the computational complexity of
minimizing the total update cost.

One difference between the framework of Feder et al.\ and ours is
that they require the computation of a specific numeric value
(the value of the median, the length of a shortest path).  We,
on the other hand, aim to obtain a subset of edges that form
an MST.  In general, our version of the problem may require far 
fewer updates.  Indeed, for the MST with vertex uncertainties, 
it is obvious that one must update all non-trivial areas
to compute the cost of the MST exactly.  However, the cost of the
MST may not be needed in many cases: if the MST is to be used
as a routing structure in a wireless ad-hoc network, then it
suffices to determine the edge set.  Also, our algorithms
aim towards on-line optimality against an adversary, whereas their 
off-line algorithms aim for static optimality.

Further work in this vein attempts to compute other aggregate functions to 
a given degree of tolerance, and establishes tradeoffs between update costs
and error tolerance or presents complexity results for
computing optimal strategies, see e.g.\ \cite{Olston-Widom/00, Khanna-Tan/01}.

Another line of work considers the robust spanning tree
problem with interval data.
For a given graph with weight
intervals specified for its edges, the goal is to compute a spanning
tree that minimizes the worst-case deviation from the minimum spanning
tree (also called the \emph{regret}), over all realizations of the edge
weights. This is an off-line problem, and no update operations
are involved. The problem is proved $\mathcal{NP}$-hard in~\cite{Aron_et_al/04}.
A $2$-approximation algorithm is given in~\cite{Kasperski-Zielinski/05}.
Further work has considered heuristics or exact algorithms for the problem,
see e.g.\ \cite{Yaman_et_al/01}.

In the setting of geometric problems with imprecise points,
L\"offler and van Kreveld have studied the problem of
computing the largest or smallest convex hull over all
possible locations of the points inside their uncertainty
areas~\cite{Loeffler-vanKreveld/06}. Here, the option
of updating a point does not exist, and the goal is
to design fast algorithms computing an extremal solution
over all possible choices of exact values of the input data.

\smallskip
The remainder of the paper is organized as follows.
In Section~\ref{sec:prelim}, we define our problems and introduce
the witness algorithm of \cite{Bruce_et_al/05} in general form. 
Sections~\ref{sec:edgeu}
and~\ref{sec:vertexu}
give our results for MSTs with edge 
and vertex uncertainty, respectively.

\section{Preliminaries}
\label{sec:prelim}

The \code{mst-edge-uncertainty} problem is defined as follows:
Let $G=(V,E)$ be a connected, undirected, weighted graph. Initially the edge weights
$w_e$ are unknown; instead, for each edge $e$ an area
$A_e$ is given with $w_e \in A_e$. 
When updating an edge $e$, the value of $w_e$ is
revealed. The aim is to find (the edge set of) an MST
for $G$ with the least number of updates.

In applications such as mobile ad-hoc networks it 
is natural to assume the vertices of our graph are embedded in two
or three dimensional space.  This leads
to the \code{mst-vertex-uncertainty} problem defined as follows:
Let $G=(V,E)$ be a connected, undirected, weighted graph. The vertices
correspond to points in Euclidean space.
We refer to the point $p_v$ corresponding to a vertex $v$ as its
\emph{location}. The weight of an edge is the
Euclidean distance  between the locations of its vertices.
Initially the locations of the vertices are not known; instead,
for each vertex $v$ an area
$A_v$ is given with $p_v \in A_v$, where $p_v$ is the actual location of
vertex~$v$.
When a vertex $v$ is updated, the location $p_v$ is
revealed. The aim is to find an MST for $G$ with
the least number of updates.

Formally we are interested in on-line update problems of the following type:
Each problem instance $P=(C,A,\phi)$ consists of an ordered set of data
$C=\{c_1,\dots,c_n\}$, also called a configuration,
and a function $\phi$ such that $\phi(C)$ is the
set of solutions for $P$. (The function $\phi$ is the
same for all instances of a problem and can thus be taken
to represent the problem.)
At the beginning the set $C$ is not known to the algorithm; instead,
an ordered set of areas $A=\{A_1,\dots, A_n\}$ is given, such
that $c_i\in C$ is an element of $A_i$.
The sets $A_i$ are called \emph{areas of
uncertainty} or \emph{uncertainty areas} for $C$.
We say that an uncertainty area $A_i$ that consists of a single element
is \emph{trivial}.
For example, in the \code{mst-edge-uncertainty} problem, $C$ consists of
the given graph $G=(V,E)$ and its $|E|$ actual edge weights.
The ordered set of areas $A$ specifies the graph $G$ exactly (so we assume
complete knowledge of $G$) and, for each
edge~$e\in E$, contains an area $A_e$ giving the possible
values the weight of $e$ may take. Then $\phi(C)$ is the set of
MSTs of the graph with edge weights given by $C$,
each tree represented as a set of edges.

For a given set of uncertainty areas $A=\{A_1,\dots, A_n\}$,
an area $A_i$ can be \emph{updated}, which
reveals the exact value of $c_i$. 
After updating $A_i$, the new ordered set of areas of
uncertainty for $C$ is
$\{A_1,\dots,A_{i-1},\{c_i\},A_{i+1}, \dots, A_n \}$.
Updating all non-trivial areas would reveal the configuration $C$
and would obviously allow us to calculate an element of $\phi(C)$
(under the natural assumption that $\phi$ is computable). 
The aim of the on-line algorithm is to
minimize the number of updates needed in order to compute an element
of $\phi(C)$. 

An algorithm is $k$-update competitive for a given problem
$\phi$ if for every
problem instance $P=(C,A,\phi)$ the algorithm
needs at most $k\cdot \OPT+c$ updates,
where $c$ is a constant and $\OPT$ is the minimum
number of updates needed to verify an element of $\phi(C)$.
(For our algorithms we can take $c=0$, but our lower bounds
apply also to the case where $c$ can be an arbitrary constant.)
Note that the primary aim is to minimize the number of updates needed
to calculate a solution. We do not consider running time or space
requirements in detail, but note that our algorithms are clearly
polynomial, provided that one can obtain the infimum and supremum
of an area in $O(1)$ time, an assumption which holds e.g.\ if areas are
open intervals.

\begin{figure}
  \centering
  \scalebox{.4}{\includegraphics{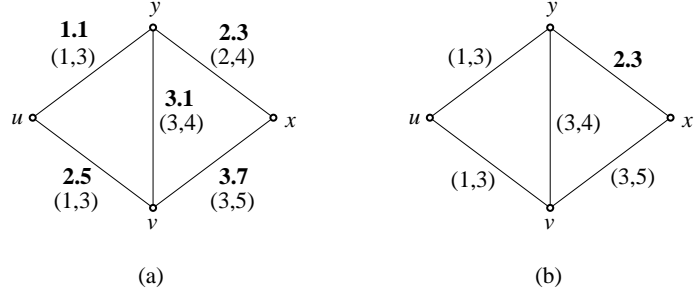}}
  \caption{(a) Instance of \code{mst-edge-uncertainty} $\;$
    (b) Updating the edge
    $\{x,y\}$ suffices to verify an MST}
  \label{fig:example}
\end{figure}%
As an example, consider the instance of
\code{mst-edge-uncertainty} shown in Figure~\ref{fig:example}(a),
where each edge is labeled with its actual weight (in bold) and its
uncertainty area (an open interval). Updating the edge $\{x,y\}$
leads to the situation shown in Figure~\ref{fig:example}(b) and
suffices to verify that the edges $\{u,y\}$, $\{u,v\}$ and $\{x,y\}$
form an MST regardless of the exact weights of
the edges that have not yet been updated. If no edge is updated, one cannot
exclude that an MST includes the edge $\{v,x\}$
instead of $\{x,y\}$, as the former could have weight $3.3$ and
the latter weight $3.9$, for example. Therefore, for the instance
of \code{mst-edge-uncertainty} in Figure~\ref{fig:example}(a)
the minimum number of updates is~$1$.

\subsection{The Witness Algorithm}
\label{sec:witness}%
The witness algorithm for problems with uncertain input
was first introduced in \cite{Bruce_et_al/05}. 
This section describes the witness algorithm in a more
general setting and notes some of its properties.
We call $W \subseteq A$ a {\it witness set} of
$(A,\phi)$ if for every possible configuration $C$ (where
$c_i\in A_i$) no element of
$\phi(C)$ can be verified without updating an element of $W$.
In other words, any set of updates that suffices to verify
a solution must update at least one area of~$W$.
The witness algorithm for a problem instance $P=(C,A,\phi)$ is
shown in Figure~\ref{fig:witnessalgo}.%
\begin{figure}
\centering\begin{minipage}{23.4cm}\begin{tabbing}
ww\=ww\=ww\=ww\=\kill
  \textbf{if} an element of $\phi(C)$ can not be calculated from $A$ \textbf{then}\+\\
    find a witness set $W$\\
    update all areas in $W$\\
    let $A'$ be the areas of uncertainty after updating $W$\\
    restart the algorithm with $P'=(C,A',\phi)$\-\\
  \textbf{end if}\\
  \textbf{return} an element of $\phi(C)$ that can be calculated from $A$
\end{tabbing}\end{minipage}
\caption{The general witness algorithm}
\label{fig:witnessalgo}
\end{figure}

For two ordered sets of areas $A=\{A_1,A_2,\dots,A_n\}$ and
$B=\{B_1,B_2,\dots,B_n\}$ we say that $B$ is at least as 
narrow as $A$ if $B_i \subseteq A_i$ for all $1\le i\le n$. 
The following lemma is easy to prove.

\begin{lemma}\label{lem:narrow}
Let $P=(C,A,\phi)$ be a problem instance and $B$ be a narrower set of areas
than $A$. 
Further let $W$ be a witness set of $(B,\phi)$. Then $W$ is also a
witness set of $(A,\phi)$.
\end{lemma}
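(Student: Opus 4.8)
The plan is to argue by contradiction, using the definition of a witness set directly. Suppose $W$ is a witness set of $(B,\phi)$ but not of $(A,\phi)$. By the definition of witness set, failing to be a witness set of $(A,\phi)$ means there is some configuration $C$ compatible with $A$ (i.e.\ $c_i \in A_i$ for all $i$) and some set of updates $U$ with $U \cap W = \emptyset$ such that updating the areas in $U$ suffices to verify an element of $\phi(C)$. I would like to transport this ``bad'' configuration and update set to the instance with areas $B$, thereby contradicting that $W$ is a witness set of $(B,\phi)$.

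The first step is to observe that, since $B$ is at least as narrow as $A$, we have $B_i \subseteq A_i$ for every $i$; hence any configuration $C$ with $c_i \in B_i$ also satisfies $c_i \in A_i$. So the family of configurations compatible with $B$ is a subset of those compatible with $A$. The delicate point is the converse direction I actually need: I have a configuration $C$ compatible with $A$, and I must produce one compatible with $B$ that still exhibits the same verification-without-updating-$W$ behaviour. The natural move is to note that after performing the updates in $U$, the areas $A_i$ for $i \in U$ become trivial ($\{c_i\}$), and since verification succeeds, the solution is determined regardless of the (still uncertain) values in the areas indexed outside $U$. In particular, replacing each $A_i$ by $B_i$ for $i \notin U$ only shrinks the remaining uncertainty, so the same solution is still verified; and for $i \in U$ we may simply pick any $c_i' \in B_i$ (possible since $B_i$ is nonempty, being an uncertainty area for some legal configuration). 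This yields a configuration $C'$ compatible with $B$ together with the update set $U$, disjoint from $W$, that verifies an element of $\phi(C')$ — contradicting that $W$ is a witness set of $(B,\phi)$.

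The main obstacle, and the step to handle carefully, is making precise the claim that ``verification is monotone under narrowing of the non-updated areas'': that if a set of updates $U$ suffices to verify a solution for configuration $C$ with areas $A$, then it still suffices when the areas outside $U$ are replaced by narrower ones. This should follow immediately from the definition of ``verify'' — a solution is verified by a set of updates if it lies in $\phi(C')$ for \emph{every} configuration $C'$ consistent with the post-update areas — because shrinking the post-update areas only removes configurations from that universal quantifier. Once this monotonicity observation is stated cleanly, the rest is bookkeeping about nonemptiness of the $B_i$ and the disjointness $U \cap W = \emptyset$ being preserved verbatim. Thus the lemma follows.
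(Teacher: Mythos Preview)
The paper does not supply a proof; it only says the lemma is ``easy to prove,'' so there is nothing to compare your argument against directly. Your monotonicity idea---that narrowing the non-updated areas can only help verification---is the right core observation.

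The gap is the step ``for $i\in U$ we may simply pick any $c_i'\in B_i$.'' The solution $S$ you carry over was verified relative to post-update areas $(\{c_i\}_{i\in U},\{A_i\}_{i\notin U})$; you need a solution verified relative to $(\{c_i'\}_{i\in U},\{B_i\}_{i\notin U})$. Monotonicity applies only if the latter family is coordinatewise contained in the former, which holds for $i\notin U$ but fails for $i\in U$ once you replace $\{c_i\}$ by an unrelated singleton $\{c_i'\}$. A concrete failure: two items, $A_1=A_2=[0,10]$, $B_1=\{7\}$, $B_2=[0,10]$, $\phi=$ ``output an index of a maximum element,'' $W=\{2\}$. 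Then $W$ is a witness set of $(B,\phi)$, but for the configuration $(10,5)$ (compatible with $A$, not with $B$) updating $U=\{1\}$ alone already verifies the solution ``$1$''; your construction would set $c_1'=7$ and claim that the same $U$ still verifies a solution starting from $B$, which it does not. In fact this example shows that the lemma, read with the paper's ``for every possible configuration'' definition of witness set, is literally false. What is actually needed---and what your monotonicity step proves cleanly, with no manufactured $C'$---is the restriction to configurations compatible with $B$: in the witness algorithm $B$ arises from $A$ by revealing true values of the fixed instance configuration $C$, so $c_i\in B_i$ throughout, the post-update singletons $\{c_i\}$ agree on $U$, and the contradiction with $W$ being a witness set of $(B,\phi)$ follows directly. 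That restricted form is all that Theorem~\ref{th:global_bound} requires.
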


\begin{theorem}
\label{th:global_bound}
If there is a global bound $k$ on the size of any witness set used by
the witness algorithm, then the witness algorithm is $k$-update competitive.
\end{theorem}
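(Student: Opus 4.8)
The plan is to prove the slightly stronger statement that the witness algorithm uses at most $k\cdot\OPT$ updates, so that the additive constant may be taken to be $0$. Let $W_1,W_2,\dots,W_t$ be the witness sets found in the successive iterations, and let $A=A^{(0)},A^{(1)},\dots,A^{(t)}$ be the corresponding sequences of uncertainty areas, where $A^{(i)}$ arises from $A^{(i-1)}$ by updating every area of $W_i$, and where a solution can be computed from $A^{(t)}$ but from no earlier $A^{(i)}$ (this is exactly the stopping behaviour of the algorithm). The algorithm performs $\sum_{i=1}^t|W_i|$ updates, which by hypothesis is at most $kt$, so it suffices to show $t\le\OPT$. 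To that end I would fix an optimal offline update set, i.e.\ a set $U^\ast$ of areas with $|U^\ast|=\OPT$ such that after updating precisely the areas of $U^\ast$ in the original instance, a solution can be verified.

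The core of the argument is a charging scheme that assigns to each iteration $i$ a distinct element of $U^\ast$. First I would observe that $D_i:=U^\ast\setminus(W_1\cup\dots\cup W_{i-1})$, viewed as a set of updates applied to the instance with areas $A^{(i-1)}$, still suffices to verify a solution: updating $D_i$ starting from $A^{(i-1)}$ produces areas that are at least as narrow as those obtained by updating $U^\ast$ starting from $A$ (both reveal exactly the true values at the positions in $U^\ast$, and at every other position the former is only potentially narrower), and the property ``a solution can be verified'' is preserved when areas are narrowed, because narrowing only shrinks the set of configurations consistent with the areas. Now, since $W_i$ is a witness set of $(A^{(i-1)},\phi)$, every set of updates that verifies a solution for that instance must contain an element of $W_i$; applying this to $D_i$ yields an element $x_i\in D_i\cap W_i$. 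By construction $x_i\in U^\ast$ and $x_i\notin W_1\cup\dots\cup W_{i-1}$.

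Finally I would check that $x_1,\dots,x_t$ are pairwise distinct: if $x_i=x_j$ with $j<i$, then $x_i=x_j\in W_j\subseteq W_1\cup\dots\cup W_{i-1}$, contradicting the choice of $x_i$. Hence $U^\ast$ contains at least $t$ distinct elements, so $\OPT=|U^\ast|\ge t$; this also shows the algorithm terminates (indeed $t\le n$). Putting everything together, the algorithm makes at most $kt\le k\cdot\OPT$ updates, as required. The boundary case $\OPT=0$ is precisely the case where a solution can already be computed from $A$, and then the algorithm makes no updates.

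The step I expect to need the most care — though it is not really a deep obstacle — is the first claim of the second paragraph: one must make precise that ``a solution can be verified from a given set of areas'' is a monotone predicate with respect to narrowing of the areas. This is the verification-side counterpart of Lemma~\ref{lem:narrow} (which only speaks about witness sets), and once it is in hand the charging argument is routine.
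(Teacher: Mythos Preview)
Your proof is correct and is essentially the standard argument the paper points to (the reference to \cite{Bruce_et_al/05} together with Lemma~\ref{lem:narrow}); the monotonicity-of-verification fact you isolate is precisely what underlies Lemma~\ref{lem:narrow}, and the charging of each iteration to a distinct element of $U^\ast$ is the intended route. There is nothing substantive to add.
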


Theorem~\ref{th:global_bound}  was proved in a slightly different 
setting in~\cite{Bruce_et_al/05},
but the proof carries over to the present setting
in a straightforward way by using Lemma~\ref{lem:narrow}.

\section{Minimum Spanning Trees with Edge Uncertainty}
\label{sec:edgeu}%
In this section we 
present
an algorithm \code{\tempone} for the problem \code{mst-edge-uncertainty}. In
the case that all areas of uncertainty are either open or trivial,
algorithm \code{\tempone} is $2$-update competitive, which we show is optimal. 
Furthermore,
we show that for arbitrary areas of uncertainty there is no constant
update competitive algorithm.

First, let us recall a well known property, usually referred to
as the \emph{red rule} \cite{Tarjan/83b}, of MSTs:

\begin{proposition}
  \label{out}
  Let $G$ be a weighted graph and let $C$ be a cycle in~$G$. 
  If there exists an edge $e \in C$ with 
  $w_e > w_{e'}$ for all $e' \in C -\{e\}$,
  then $e$ is not in any MST of~$G$.
\end{proposition}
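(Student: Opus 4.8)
The plan is to prove the contrapositive by an exchange (``swapping'') argument, the standard way of establishing optimality properties of MSTs. Suppose, for contradiction, that the edge $e = \{u,v\}$ of the cycle $C$ does lie in some MST $T$ of $G$, while $w_e > w_{e'}$ for every $e' \in C - \{e\}$.

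First I would delete $e$ from $T$. Since $T$ is a spanning tree, $T - \{e\}$ has exactly two connected components; let $V_u$ denote the vertex set of the component containing $u$ and $V_v$ that of the component containing $v$. Thus $(V_u, V_v)$ is a partition of $V(G)$, and $e$ is one edge crossing the associated cut.

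Next I would locate a cheaper replacement edge on the cycle. The cycle $C$, with $e$ removed, is a path from $u$ to $v$; it starts in $V_u$ and ends in $V_v$, so at least one of its edges $e' \in C - \{e\}$ must have one endpoint in $V_u$ and the other in $V_v$. Consequently $e'$ also crosses the cut $(V_u, V_v)$, and $T' := (T - \{e\}) \cup \{e'\}$ reconnects the two components into a single spanning tree of $G$. Since $e' \in C - \{e\}$, the hypothesis gives $w_{e'} < w_e$, so the total weight of $T'$ is strictly less than that of $T$ — contradicting the minimality of $T$. Hence $e$ is in no MST of $G$.

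The only step needing a moment's care is the existence of the crossing edge $e'$ on $C - \{e\}$; this is just the observation that a cycle must cross any vertex cut an even number of times, so since $e$ itself crosses the cut $(V_u, V_v)$, there is at least one further crossing edge among the remaining edges of $C$. Everything else is the routine tree-exchange computation and I would not belabor it.
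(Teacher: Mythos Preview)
Your proof is correct and is exactly the standard exchange argument for the red rule. The paper itself does not give a proof of this proposition; it merely recalls it as a well-known property of MSTs with a citation to Tarjan's textbook, so there is nothing further to compare.
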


We will use the following notations and definitions:
A graph $\U=(V,E)$ with an area $A_e$ for each edge $e\in E$
is called an {\it edge-uncertainty graph}.
We say a weighted graph
$G=(V,E)$ with edge weights $w_e$ is a {\it realization} of $\U$ if $w_e
\in A_e$ for every $e \in E$.
Note that $w_e$ is associated with $G$ and $A_e$ with $\U$.
We also say that an edge $e$ is {\it trivial} if the area $A_e$ is trivial.

For an edge $e$ in an edge-uncertainty graph we denote the upper limit
of $A_e$ 
by $U_e = \sup A_e$ and the lower limit
of $A_e$ by $L_e = \inf A_e$.

We extend the notion of an MST to edge-uncertainty
graphs in the following way:
Let $\U$ be an edge-uncertainty graph. We say $T$ is an
\emph{MST of $\U$} if $T$ is an MST
of every realization of~$\U$. 
Clearly not every edge-uncertainty graph has an MST.

Let $C$ be a cycle in $\U$. We say the edge $e \in C$ is an
{\it always maximal} edge in $C$ if $L_e \ge U_c$ for all $c \in C-\{e
\}$. Therefore in every realization $G$ of $\U$ we have $w_e \ge w_c$ for
all $c \in C-\{e\}$.

Note that a cycle can have more than one always maximal edge
and not every cycle has an always maximal edge. The following
lemma deals with cycles of the latter kind:

\begin{lemma}
  \label{fg}
  Let $\U$ be an edge-uncertainty graph. Let $C$ be a cycle in $\U$.
  Let $C$ not have an always maximal edge. Then for any $f \in C$ with 
  $U_f = \max\{U_c \mid c\in C\}$ we have that $f$ is non-trivial and
  there exists an edge $g \in C-\{f\}$ with $U_g > L_f$.
\end{lemma}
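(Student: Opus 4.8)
The plan is to argue by contradiction: suppose $C$ has no always maximal edge, pick $f \in C$ with $U_f = \max\{U_c \mid c \in C\}$ maximal, and derive a contradiction from the negation of the conclusion. There are two things to show — that $f$ is non-trivial, and that some $g \in C - \{f\}$ has $U_g > L_f$ — and I would handle them in that order, since the second is the heart of the matter.

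First I would observe that $f$ cannot be trivial. If $A_f = \{x\}$ were trivial, then $L_f = U_f = x$, and since $U_f$ is the maximum of all the $U_c$ over $c \in C$, we would have $L_f = U_f \ge U_c$ for every $c \in C - \{f\}$. But that is exactly the definition of $f$ being an always maximal edge in $C$, contradicting the hypothesis that $C$ has no always maximal edge. Hence $f$ is non-trivial, so $A_f$ contains more than one point and $L_f < U_f$.

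Next, for the existence of $g$: suppose, for contradiction, that $U_c \le L_f$ for every $c \in C - \{f\}$. Combined with $L_f \le U_f$, this says $L_f \ge U_c$ for all $c \in C - \{f\}$, which again makes $f$ an always maximal edge of $C$ — contradiction. Therefore there must exist some $g \in C - \{f\}$ with $U_g > L_f$, which is what we wanted.

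The statement is essentially a direct unwinding of the definition of "always maximal": the conclusion is just the contrapositive of "$f$ is always maximal" together with the easy observation about triviality, so there is no real obstacle here — the only thing to be careful about is that the chosen $f$ is the $U$-maximal edge (so that $U_c \le U_f$ for all $c$), which is what lets us convert "$U_c \le L_f$ for all $c \ne f$" into "$L_f \ge U_c$ for all $c \ne f$" without separately worrying about $U_f$ itself. Since $L_f \le U_f$ always holds, this conversion is immediate, and the proof is short.
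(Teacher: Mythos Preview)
Your proof is correct and follows essentially the same route as the paper's own argument: both show non-triviality by noting that $L_f=U_f$ together with $U_f\ge U_c$ would make $f$ always maximal, and both obtain $g$ by observing that $U_c\le L_f$ for all $c\ne f$ is exactly the definition of $f$ being always maximal. The final paragraph's commentary about ``conversion'' is a bit muddled (the two inequalities you mention are literally identical), but the actual proof steps are fine.
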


\begin{proof}
  Let $f \in C$ be an edge with $U_f = \max\{U_c \mid c \in C\}$. 
  If $L_f=U_f$ the edge $f$ would be always maximal. Hence $L_f$ must
  be strictly smaller than $U_f$ and $f$ is non-trivial. 
  Since there is no always maximal edge in $C$,  we
  have that $L_f < \max\{U_c \mid c \in C-\{f\}\}$. 
  Therefore there exists at least one edge $g$ in $C-\{f\}$ with
  $L_f < U_g$.
\end{proof}

\begin{proposition}
  \label{path1}
  Let $\U$ be an edge-uncertainty graph with an MST
  $T$. Let $f=\{u,v\}$ be an edge of $ \U$ such that $f \not \in T$.
  Let $P$ be the path in $T$ connecting $u$ and $v$, then $U_p \le L_f$
  for all $p \in P$.
\end{proposition}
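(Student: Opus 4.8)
The plan is to argue by contradiction, exploiting the fact that $T$ is an MST of \emph{every} realization of $\U$. Suppose the claim fails, so there is an edge $p_0 \in P$ with $U_{p_0} > L_f$. I want to build a realization $G$ of $\U$ in which $T$ is not an MST, which contradicts the hypothesis that $T$ is an MST of $\U$. The natural witness of non-optimality will be the cycle $C = P + \{f\}$ (the fundamental cycle of $f$ with respect to $T$): if in some realization the weight of $p_0$ strictly exceeds the weights of all other edges of $C$, then by the red rule (Proposition~\ref{out}) $p_0$ is in no MST of that realization, contradicting $p_0 \in T$.

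The concrete construction: choose a realization $G$ of $\U$ in which $w_{p_0}$ is taken close to $U_{p_0}$ and $w_f$ is taken close to $L_f$, so that $w_{p_0} > w_f$. For every other edge $p \in P - \{p_0\}$ I want $w_p < w_{p_0}$ as well; since the $A_e$ are open or trivial we have some freedom here, but there is a subtlety: it need not be true that $U_p \le U_{p_0}$ for all $p \in P$, so I cannot simply push every $w_p$ below $w_{p_0}$ in general. This is where I expect the main obstacle to lie — controlling the weights of the \emph{other} path edges. The way around it is to not insist on killing $p_0$ directly; instead, observe that whatever realization we pick, $T$ must be an MST, hence for the fundamental cycle $C = P+\{f\}$ no edge of $C$ can be the unique maximum (red rule). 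In particular, applying this to a realization where $w_f$ is pushed down toward $L_f$ and each $w_p$ is pushed up toward $U_p$: the maximum-weight edge of $C$ in this realization is some edge of $P$ (since $w_f$ is near $L_f < U_{p_0} \le \max_p U_p$ while that path edge is near its own supremum, exceeding $w_f$), and for $T$ to remain an MST this maximum must be attained non-uniquely — but by choosing the open-interval representatives generically (perturbing each $U_p$ representative to distinct values) we can force a unique maximum inside $P$, giving the contradiction.

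So the key steps, in order, are: (1) set $C = P + \{f\}$, the fundamental cycle; (2) assume for contradiction some $p \in P$ has $U_p > L_f$, and let $p^\ast \in P$ be an edge attaining $\max_{p\in P} U_p$, so $U_{p^\ast} > L_f$; (3) construct a realization $G$ by choosing $w_f \in A_f$ with $w_f < U_{p^\ast}$ close to $L_f$ (possible since $A_f$ is open or, if trivial, then $L_f = w_f < U_{p^\ast}$ already), and choosing $w_p \in A_p$ for each $p \in P$ close to $U_p$ but with the $w_p$ all distinct and with $w_{p^\ast}$ strictly the largest — this is where openness/triviality of the areas and a genericity perturbation are used, and it is the step needing the most care; (4) in $G$, the edge $p^\ast$ is the unique maximum-weight edge of the cycle $C$, since every other edge of $P$ has strictly smaller weight by construction and $w_f < U_{p^\ast}$ can be arranged below $w_{p^\ast}$; (5) apply Proposition~\ref{out} to conclude $p^\ast$ is in no MST of $G$, contradicting the fact that $p^\ast \in P \subseteq T$ and $T$ is an MST of every realization of $\U$, in particular of $G$. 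Hence $U_p \le L_f$ for all $p \in P$.

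The one technical point worth double-checking is step (3) in the trivial-area case: if some $A_p$ along $P$ is trivial, its weight is fixed, and I must make sure the perturbation argument still produces a unique maximum — but a trivial area contributes a single fixed value, and as long as $p^\ast$ itself can be chosen with $A_{p^\ast}$ either open (perturb freely) or trivial with its value already strictly above everything else on $C$ (which forces $U_{p^\ast} = L_{p^\ast} = w_{p^\ast}$, and then we just need $w_f$ and the other fixed/chosen path weights to be strictly below it, using openness of the non-trivial ones among them), the contradiction goes through. If it turns out $p^\ast$ cannot be separated from a tie with another edge of $C$ using only admissible realizations, that would actually mean $U_{p^\ast} \le L_f$ after all (the tie can only be with $f$, forcing $U_{p^\ast}=L_f$, contradicting $U_{p^\ast}>L_f$), so the case analysis closes cleanly either way.
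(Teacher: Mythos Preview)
Your overall plan---assume some $p\in P$ has $U_p>L_f$, build a realization in which $T$ fails to be an MST, and derive a contradiction---is exactly the paper's approach. The difference is that you route the contradiction through the red rule (Proposition~\ref{out}), which forces you to manufacture a realization in which some edge of $P$ is the \emph{unique} maximum on the cycle $C=P\cup\{f\}$. This is where the argument goes off the rails.

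The claim in your final paragraph that ``the tie can only be with $f$'' is false. Consider $P$ containing two trivial edges $p_1,p_2$ with $A_{p_1}=A_{p_2}=\{5\}$ and $A_f=(3,7)$. Then $U_{p_1}=U_{p_2}=5>L_f=3$, but in every realization $w_{p_1}=w_{p_2}=5$, so neither can ever be the strict maximum of $C$ and Proposition~\ref{out} does not apply. Your step~(3) and its patch do not handle this case. More generally, you are also implicitly importing the open-or-trivial hypothesis on the areas, which the proposition does not assume.

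The paper avoids all of this by not invoking the red rule at all. Once you have any single edge $p\in P$ with $U_p>L_f$, a two-line sup/inf argument gives a realization $G$ with $w_p>w_f$; then $T-\{p\}+\{f\}$ is a spanning tree of strictly smaller weight than $T$, so $T$ is not an MST of $G$, contradicting that $T$ is an MST of~$\U$. No uniqueness, no control over the other edges of $P$, and no assumption on the shape of the uncertainty areas is needed.
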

 
\begin{proof}
  Assume there exists a $p \in P$ with $U_p > L_f$. Then there exists
  a realization $G$ of $\U$
  with $w_p > w_f$. Hence by removing the edge $p$ and adding the edge
  $f$ to $T$ we obtain a spanning tree that is cheaper than
  $T$. So $T$ is not an
  MST for $G$. This is a contradiction since $T$ is
  an MST of $\U$ and therefore of any realization
  of~$\U$.
\end{proof}

Our algorithm \code{\tempone} applies
the red rule to the given uncertainty graph, but we
have to be careful about the order in which edges are considered.
The order we use is as follows:
Let $\U$ be an edge-uncertainty graph and let $e,f$ be
two edges in $\U$. We say 
\begin{itemize}
\item[] $e < f$ if $L_e<L_f$ or ($L_e = L_f$ and $U_e < U_f$),
\item[] $e \le f$ if $e < f$ or ($L_e=L_f$ and $U_e=U_f$). 
\end{itemize}
Edges with the same upper and lower weight limit are ordered
arbitrarily.

\begin{figure}
\hspace*{0.15\textwidth}\begin{minipage}{0.7\textwidth}
01 Index all edges such that 
$e_1 \le e_2 \le \dots \le e_m$.\\
02 Let $\G$ be $\U$ without any edge\\
03 \textbf{for} $i$ from $1$ to $m$ \textbf{do}\\
04 \hspace{1cm} add $e_i$ to $\G$ \\
05 \hspace{1cm} \textbf{if} $\G$ has a cycle $C$ \textbf{then}\\
06\hspace{2cm} \textbf{if} $C$ contains an always maximal edge $e$ \textbf{then}\\
07\hspace{3cm} delete $e$ from $\G$\\
08\hspace{2cm}  \textbf{else} \\
09\hspace{3cm} let $f \in C$ such that $U_f = \max\{U_c| c\in C\}$ \\
10\hspace{3cm} let $g \in C-\{f\}$ such that $U_g > L_f$ \\
11\hspace{3cm} update $f$ and $g$ \\
12\hspace{3cm} restart the algorithm \\
13\hspace{2cm} \textbf{end if} \\
14\hspace{1cm} \textbf{end if}\\
15  \textbf{end for} \\
16  \textbf{return} $\G$
\end{minipage}
\caption{Algorithm \code{\tempone}}
\label{fig:algo}
\end{figure}%
Algorithm \code{\tempone} is shown in Figure~\ref{fig:algo}. Observe that:
\begin{itemize}
\item In case no update is made the algorithm \code{\tempone} will
  perform essentially Kruskal's algorithm \cite{Cormen_et_al/01}.
  When a cycle is created
  there will be an always maximal edge in that cycle.
  Due to the order
  in which the algorithm adds the edges to $\G$ the edge
  $e_i$ that closes a cycle $C$ must be an always maximal edge in
  $C$. So where Kruskal's algorithm
  does not add an edge to $\G$ when it would close a cycle, the \code{\tempone}
  algorithm adds this edge to $\G$ but then deletes it or an
  equally weighted edge in the cycle from $\G$. 
\item
  By Lemma \ref{fg} the edges $f,g$ in line $9$ and $10$ exist and $f$
  is non-trivial.
\item The algorithm will terminate.
  The algorithm either updates at
  least one non-trivial edge $f$ and restarts, or does not perform any
  updates. Hence the algorithm \code{\tempone} will eventually return an
  MST of $G$.
\item During the run of the algorithm the graph $\G$ is either a
  forest or contains one cycle. In case the most recently added edge
  closes a cycle either one edge of the cycle will be deleted or after
  some updates the algorithm restarts and $\G$ has no edges. Hence at
  any given time there is at most one cycle in $\G$.
\end{itemize}

As the algorithm may restart itself, we say a \emph{run} is completed if the
algorithm restarts or returns the MST. In case of a
restart, another run of the algorithm starts.

Before showing that the algorithm \code{\tempone} is  $2$-update
competitive under the restriction to open or trivial areas,
we discuss some technical preliminaries.
In each run the algorithm considers all edges in a certain order
$e_1, \dots, e_m$. 
During the run of the algorithm we refer to the currently considered
edge as $e_i$. Let $u$ and $v$ be two distinct vertices. In case $u$ and $v$
are in the same connected component of the subgraph with edges
$e_1,\ldots,e_{i-1}$, then they are also 
connected in the current $\G$. Furthermore, we need
some properties of a path connecting $u$ and $v$ in $\G$ under certain
conditions. 
The next two lemmas establish these properties. They are technical and
are solely needed in the proof of Lemma \ref{witness_mst_edge}.  

\begin{lemma}\label{lem:singleedge}
  Let $h=\{u,v\}$ and $e$ be two edges in $\U$. Let $h\neq e$ and
  $L_h<U_e$. Let the algorithm be in a state such that $h$ has been
  considered. Then $u$ and $v$ are connected in the current $\G -
  \{e\}$.
\end{lemma}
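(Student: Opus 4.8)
The plan is to track the moment when the edge $h=\{u,v\}$ was considered by the algorithm and to argue that, from that point on, $u$ and $v$ stay connected in $\G$ even if we delete the single edge $e$. First I would consider the step of the \textbf{for}-loop in which $h=e_j$ is added to $\G$. If $h$ was added without closing a cycle, then immediately after this step $h$ itself lies in $\G$; if $h$ closed a cycle $C$, then either an always maximal edge $e'$ of $C$ was deleted, or the algorithm performed updates and restarted. In the restart case the claim is vacuous for the present run, so I may assume no restart happened while processing $h$ (and, by the choice of state in the lemma, none happened afterward up to the current state either). In the no-cycle case $h$ is present in $\G$ right after step~$j$. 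In the cycle case, after deleting the always maximal edge $e'$ of $C$, the remaining edges of $C-\{e'\}$ form a $u$--$v$ path in $\G$; since $e'$ is always maximal in $C$ we have $L_{e'}\ge U_c$ for all $c\in C-\{e'\}$, and in particular, if $e$ happened to be one of the edges of this path, then $U_e \le L_{e'}$. But wait — I need $e\neq e'$ for that path to be the right object; if $e=e'$, then $e$ was already deleted and trivially $u,v$ are connected in $\G-\{e\}=\G$.

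The core of the argument is then a monotonicity/invariant claim: once $u$ and $v$ are connected in $\G$ by a path all of whose edges $c$ satisfy $U_c \le L_h$, this remains true (with the same kind of path) until the current state, so that removing $e$ — which has $U_e > L_h \ge U_c$ for every such path edge $c$, hence $e$ is not on the path — cannot disconnect $u$ from $v$. To establish this I would examine the only operations that remove edges from $\G$: deleting an always maximal edge when a cycle is formed. Suppose at some later step $e_k$ closes a cycle $C'$ and an always maximal edge $e''$ of $C'$ is deleted. If $e''$ is not on our current $u$--$v$ path, the path survives untouched. If $e''$ is on the path, then $C'-\{e''\}$ is an alternative path between the endpoints of $e''$ inside $\G$, and splicing it into our $u$--$v$ path yields a new $u$--$v$ walk, hence a $u$--$v$ path, in $\G\setminus\{e''\}$. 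I must check the new path edges still have small $U$-value: the old path edges all had $U\le L_h$; the new edges come from $C'-\{e''\}$, and since $e''$ is always maximal in $C'$, every $c\in C'-\{e''\}$ satisfies $U_c \le L_{e''}$. To close the loop I need $L_{e''}\le L_h$; this should follow from the ordering of edges used by the algorithm together with the fact that $e''$ lies on the old path and hence had $U_{e''}\le L_h$, combined with $L_{e''}\le U_{e''}$. Thus the invariant is preserved.

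Given the invariant, the conclusion is immediate: at the current state there is a $u$--$v$ path in $\G$ every edge $c$ of which has $U_c \le L_h < U_e$, so $e$ is not among these edges, and the path lies entirely in $\G-\{e\}$; hence $u$ and $v$ are connected in $\G-\{e\}$.

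I expect the main obstacle to be making the invariant airtight, specifically the bookkeeping that every edge appearing on the evolving $u$--$v$ path has upper limit at most $L_h$. The delicate point is handling the splicing step when an always maximal edge of a newly formed cycle is deleted: one must invoke both that always maximal edges dominate their cycle (so the replacement edges have bounded $U$-value) and the processing order (to relate $L_{e''}$ back to $L_h$), and one must be careful that $e$ — whose defining property is $U_e>L_h$ — can never sneak onto the path, since the path edges are always ``small''. Once this invariant is stated and verified by induction on the algorithm's steps, the lemma follows with no further work.
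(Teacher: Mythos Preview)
Your proposal is correct and rests on the same key observation as the paper's proof: whenever an edge is removed as always maximal in a cycle, the remaining cycle edges all have upper limit at most the removed edge's lower limit, so they furnish a replacement path that still avoids~$e$. The difference is packaging. The paper argues backwards and recursively: if $h$ is still in $\G$ we are done (since $h\neq e$); otherwise $h$ was always maximal in some cycle $C$, every $c\in C-\{h\}$ satisfies $L_c\le U_c\le L_h<U_e$ (hence $e\notin C$), and one simply re-applies the lemma to each such $c$, which has been considered and satisfies $L_c<U_e$. Your version instead tracks a concrete $u$--$v$ path forward through the run and splices in $C'-\{e''\}$ whenever a path edge $e''$ is deleted. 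Both work, but the recursive formulation is shorter and avoids the bookkeeping you flag as delicate.

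One point to tighten in your write-up: the invariant ``every path edge $c$ satisfies $U_c\le L_h$'' does not hold while $h$ itself is the path (in general $U_h>L_h$). You should explicitly split into ``$h$ is still in $\G$'' (trivial, since $h\neq e$) versus ``$h$ has been deleted'' (from which moment the cycle $C'$ in which $h$ was always maximal supplies a path $C'-\{h\}$ that does satisfy the invariant, and the invariant is then preserved by your splicing argument). Correspondingly, your chain $L_{e''}\le U_{e''}\le L_h$ only applies when $e''\neq h$; the first deletion, where $e''=h$, needs the separate (trivial) observation $L_{e''}=L_h$. The mention of ``processing order'' is unnecessary here; the bound comes purely from the always-maximal property.
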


\begin{proof}
  If the edge $h$ is in the current $\G$ then clearly $u$ and $v$ are
  connected in $\G-\{e\}$, so assume that $h$ is no longer in $\G$. 
  Therefore it must have been an always maximal edge in a cycle $C$.
  In order for $h$ to be an always maximal edge in $C$ we must have
  that $L_c \le U_c \le L_h$ for all $c \in C - \{h\}$. So since $L_h
  < U_e$ we have that $L_c < U_e$. Also the edge $h$ can not be
  an always maximal edge in $C$ if $C$ contains $e$.

  Clearly $C-\{h\}$ is a path in $U$ connecting $u$ and $v$ and does
  not contain $e$. Since the edges in $C -\{h\}$ might have been
  deleted from the current $\G$ themselves we have to use
  this argument repeatedly, but eventually we get a path in the
  current $\G-\{e\}$ connecting $u$ and $v$.
\end{proof}

The next lemma follows directly from Lemma~\ref{lem:singleedge}.

\begin{lemma}
  \label{path}
  Let $u,v$ be vertices and $e$ be an edge in $\U$.
  Let $P$ be a path in $\U-\{e\}$ connecting $u$ and $v$ with
  $L_p<U_e$ for all $p \in P$.
  Let the algorithm be in a state such that all edges of $P$ have been
  considered,
  then there exists a path $ P'$ in the current $\G$ connecting $u$ and
  $v$ with $e \not \in P'$.
\end{lemma}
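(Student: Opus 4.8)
The plan is to decompose the path $P$ into its individual edges and apply Lemma~\ref{lem:singleedge} to each of them. Write $P$ as a sequence of edges $h_1,h_2,\dots,h_\ell$ with $h_j=\{x_{j-1},x_j\}$, where $x_0=u$ and $x_\ell=v$. (If $\ell=0$ then $u=v$ and the claim is trivial, so assume $\ell\ge 1$.) Since $P$ is a path in $\U-\{e\}$, each $h_j$ is an edge of $\U$ with $h_j\neq e$, and by hypothesis $L_{h_j}<U_e$. Moreover, all edges of $P$ have been considered by the algorithm, so in particular $h_j$ has been considered. Thus the hypotheses of Lemma~\ref{lem:singleedge} are met with $h=h_j$ and the given edge $e$, and we conclude that $x_{j-1}$ and $x_j$ are connected in the current $\G-\{e\}$.

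Next I would concatenate these connectivity facts. For each $j$ there is a path from $x_{j-1}$ to $x_j$ in $\G-\{e\}$; stringing them together yields a walk from $x_0=u$ to $x_\ell=v$ in $\G-\{e\}$, and extracting a simple subpath from this walk gives a path $P'$ in the current $\G$ connecting $u$ and $v$ with $e\notin P'$, as required. This uses only that reachability within a fixed graph is transitive.

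I do not expect a real obstacle here; the work is entirely bookkeeping, which is why the paper states the lemma follows \emph{directly} from Lemma~\ref{lem:singleedge}. The only points to verify carefully are that the two hypotheses ``$P\subseteq \U-\{e\}$'' and ``$L_p<U_e$ for all $p\in P$'' translate precisely into the conditions $h_j\neq e$ and $L_{h_j}<U_e$ needed to invoke Lemma~\ref{lem:singleedge}, and that ``all edges of $P$ have been considered'' indeed covers each $h_j$ individually. All of these line up immediately, and no degenerate case (such as the empty path) causes trouble.
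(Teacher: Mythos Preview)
Your argument is correct and is precisely the intended one: the paper does not spell out a proof but merely notes that the lemma follows directly from Lemma~\ref{lem:singleedge}, and the edge-by-edge application plus transitivity of connectivity in $\Gamma-\{e\}$ that you describe is exactly how that direct implication works.
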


\begin{lemma}
  \label{witness_mst_edge}
  Assume that all uncertainty areas are open or trivial.
  The edges $f$ and $g$ as described in the algorithm  \code{\tempone} at
  line $9$ and $10$ form a witness set.
\end{lemma}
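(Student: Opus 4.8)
The plan is to argue by contradiction against the invariant that $\G$ contains at most one cycle. Suppose $\{f,g\}$ is not a witness set: there is a realization of the current uncertainty graph and a set $N$ of updates with $f,g\notin N$ such that, after performing the updates in $N$, the resulting uncertainty graph $\U'$ has an MST $T$ (i.e.\ $N$ verifies the solution $T$). The first claim to establish is $f\notin T$. Since $C$ has no always‑maximal edge and $U_f=\max\{U_c\mid c\in C\}$, every edge $c\in C-\{f\}$ can, in a realization of $\U'$, be given a weight strictly below $U_f$: otherwise $c$ would be pinned at the value $U_f$ in $\U'$, which (because areas are open or trivial) forces the current area of $c$ to be the singleton $\{U_f\}$ and hence $c$ to be always‑maximal in $C$, a contradiction. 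As $f$ is non‑trivial by Lemma~\ref{fg} and is not updated, $A'_f=A_f$ is open with $\sup A'_f=U_f$, so we can pick a realization $G^+$ of $\U'$ in which $w_f$ exceeds $w_c$ for all $c\in C-\{f\}$, i.e.\ $f$ is the strict maximum of $C$. By the red rule (Proposition~\ref{out}), $f$ lies in no MST of $G^+$; since $T$ is an MST of every realization of $\U'$, in particular of $G^+$, we get $f\notin T$.

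Write $f=\{u,v\}$ and let $P$ be the $u$–$v$ path in $T$. Applying Proposition~\ref{path1} to $\U'$ (which has MST $T$ and $f\notin T$) gives $U'_p\le L'_f=L_f$ for every $p\in P$, hence $L_p\le U'_p\le L_f<U_f$ for every $p\in P$, and moreover $g\notin P$ because $U'_g=U_g>L_f\ge U'_p$. The next step is to show that every edge of $P$ has been \emph{considered} by the algorithm when it reaches line~9, so that Lemma~\ref{path} (with $e=f$) applies. Since $f\in C\subseteq\G$ it has been considered, so $f\le e_i$ and thus $L_f\le L_{e_i}$. For $p\in P$ with $L_p<L_f$ we get $L_p<L_{e_i}$, so $p<e_i$. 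For $p\in P$ with $L_p=L_f$, the squeeze $L_p\le U'_p\le L_f=L_p$ gives $U'_p=L_p$, so $A'_p=\{L_f\}$ is trivial; since the original area of $p$ is open or trivial and cannot be an open interval collapsing to its infimum, $A_p=\{L_f\}$, so $U_p=L_p\le U_{e_i}$ and $p\le e_i$. Hence all of $P$ is considered, and Lemma~\ref{path} produces a $u$–$v$ path $P'$ in the current $\G$ with $f\notin P'$. Tracing through the proof of Lemma~\ref{path} via Lemma~\ref{lem:singleedge}, every edge of $P'$ is either an edge of $P$ or an edge with upper limit at most $L_f$; since $U_g>L_f$ and $g\notin P$, this gives $g\notin P'$.

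Now the contradiction: $f$ lies on the unique cycle $C$ of $\G$, so $\G-\{f\}$ is a forest and the $u$–$v$ path in it is unique. But both $P'$ and $C-\{f\}$ are $u$–$v$ paths in $\G-\{f\}$, so $P'=C-\{f\}$, contradicting $g\in C-\{f\}$ together with $g\notin P'$. Hence no such $N$ exists and $\{f,g\}$ is a witness set.

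I expect the middle step to be the main obstacle: $T$ is an arbitrary MST of $\U'$ about which we know nothing a priori, and the work is in tying its structure (via the "low" path $P$) back to the algorithm's state — precisely what the technical Lemmas~\ref{lem:singleedge} and~\ref{path} are for — and in the careful case analysis showing that every edge of $P$ has been considered, where the open/trivial assumption on areas is genuinely used.
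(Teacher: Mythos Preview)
Your proof is correct and follows essentially the same route as the paper's: assume $\{f,g\}$ is not a witness, use the red rule on a suitable realization of $\U'$ to get $f\notin T$, invoke Proposition~\ref{path1} to obtain a low $u$--$v$ path $P$ in $T$, show via the open/trivial assumption that every edge of $P$ precedes $f$ in the order, and then use Lemma~\ref{path} to produce a second $u$--$v$ path in $\G$ avoiding $g$, contradicting the single-cycle invariant. The only cosmetic differences are that the paper applies Lemma~\ref{path} with $e=g$ (obtaining $g\notin P'$ directly and leaving $f\notin P'$ implicit) and phrases the contradiction as ``$\G$ has two cycles'', whereas you take $e=f$, explicitly trace the recursion to exclude $g$ from $P'$, and phrase the contradiction as uniqueness of paths in the forest $\G-\{f\}$; your version is arguably the more carefully stated of the two.
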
 

\begin{proof}
We have the following situation: There exist a cycle $C$ in $\G$ 
with no always maximal edge. Let $m =  \max \{U_c~|~ c \in C\}$.
The edges $f$ and $g$ are in $C$ with $U_f = m$
and $U_g>L_f$.
By Lemma \ref{fg}
the area $A_f$ is non-trivial.

We now assume that the set $\{f,g\}$ is not a witness set. 
So we can update some edges, but not $f$ or $g$ such that
the resulting edge-uncertainty graph $\U'$ has an MST
$T$. Let $U'_e$ and $L'_e$ denote the upper and lower limit of an area
for an edge $e$ with regard to $\U'$. Since both edges $f$ and $g$
are not updated we note that 
$$ L_f=L'_f,  U_f=U'_f, L_g=L'_g, U_g=U'_g.$$

Since all areas in $\U'$ and $\U$ are either trivial or open, and $C$
has no always maximal edge, the weight of every edge in $C$ must be
less than $m$. In particular we have that for all $c \in C$
 $$U'_c <m  \mbox{ or } L'_c < U_c=m.$$
Since $U_f=m$ there exists a realization $G'$ of  $\U'$ and $\U$, 
where the weight of $f$ is greater than the weight of any other edge
in $C$. 
By Proposition \ref{out} the edge $f$ is not in any MST of $G'$ and therefore also not in $T$.

Let $u$ and $v$ be the vertices of $f$. By Proposition \ref{path1}
there exists a path $P$ in $\U'$ connecting $u$ and $v$ with $U'_p \le
L_f$ for all $p\in P$. Since $U_g > L_f$ and neither $f$ nor $g$ are
updated the edge $g$ is not in the path $P$.
We now argue that all edges of $P$ must have been already considered
by the algorithm. For this we look at the following two cases:
 
\noindent Case 1) Let $p \in P$ and $L'_p < L_f$. Since $L_p \le L'_p$
we have that $L_p < L_f$.

\smallskip
\noindent Case 2) Let $p \in P$ and $L'_p=L_f$. Since $U'_p \le L_f$ we have that $L'_p=U'_p=L_f$.
Either the area $A_p$ is also trivial
($L_p=U_p=L'_p=U'_p=L_f$) or $A_p$ is open and contains the point $L'_p$, in
this case $L_p<L'_p$.

\smallskip

So for all $p\in P$ we have $ L_p <L_f \mbox{ or } L_p=U_p=L_f<U_f$.
Therefore  all edges of $P$ will be considered before $f$.
We also note that $L_p \le L'_p \le L_f < U_g$ for all $p \in P$. 
By Lemma \ref{path} there exists a path $P'$ in $\G$ connecting $u$ and $v$
and $g \not \in P'$. Hence $\G$ has two cycles, which is a contradiction.
\end{proof}

Using Theorem~\ref{th:global_bound},
this leads directly to the following result.

\begin{theorem}
\label{2competitive}
Under the restriction to open and trivial areas the algorithm
\code{\tempone} is $2$-update competitive.
\end{theorem}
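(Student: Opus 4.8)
The plan is to obtain the theorem as an immediate consequence of Theorem~\ref{th:global_bound}, once \code{\tempone} has been recognised as an instance of the general witness algorithm. First I would check this identification. The only place where \code{\tempone} performs updates is lines~9--12, where it picks the pair $\{f,g\}$, updates both edges, and restarts; by Lemma~\ref{witness_mst_edge} --- whose hypothesis that all uncertainty areas are open or trivial is exactly the hypothesis of the present theorem --- this pair is a witness set of the edge-uncertainty graph currently being processed, and hence, by Lemma~\ref{lem:narrow}, also a witness set of the original one. Moreover an update turns a non-trivial area into a trivial one, so the narrower graph passed to the restarted run again has only open or trivial areas; thus the hypothesis of Lemma~\ref{witness_mst_edge} survives every restart, and a witness set of size exactly~$2$ is used each time \code{\tempone} decides to update.

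Second, I would argue that when a run completes the loop without restarting, the forest $\G$ returned at line~16 is an MST that is already \emph{certified} by the currently known areas, i.e.\ an element of $\phi(C)$ that can be calculated from the current areas in the sense required by the witness algorithm. This is the ``Kruskal behaviour'' recorded just after the description of \code{\tempone}: by the chosen order on the edges, whenever a newly added edge closes a cycle $C$ in a non-restarting run, $C$ contains an always maximal edge $e$, so $w_e \ge w_c$ for every $c \in C-\{e\}$ in every realization consistent with the known areas; deleting such an $e$ (breaking ties as in Kruskal's algorithm~\cite{Cormen_et_al/01}, and invoking the red rule of Proposition~\ref{out} in the strict case) cannot destroy the MST property, so the final $\G$ is an MST of every realization consistent with the known areas. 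Combining the two observations, \code{\tempone} is exactly the witness algorithm driven by a witness-set selection of size~$2$, and Theorem~\ref{th:global_bound} with $k=2$ yields the claimed $2$-update competitiveness.

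I expect the substantive work to live entirely inside Theorem~\ref{th:global_bound}, which we import from~\cite{Bruce_et_al/05}: one must show that the number of witness sets updated over all runs is at most $\OPT$, by an accounting argument that, via Lemma~\ref{lem:narrow} and the fact that each edge $f$ chosen at line~9 is non-trivial (hence has not been updated in an earlier run), bounds $\OPT$ from below by the number of restarts, so that \code{\tempone} performs at most twice as many updates as an optimal strategy. The only point on the present side that needs genuine care is the claim of the second paragraph --- that line~16 returns something \emph{verified}, not merely computed, as an MST --- and this is precisely where the always-maximal property of the closing edge, together with Proposition~\ref{out}, does the work.
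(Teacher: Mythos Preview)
Your proposal is correct and follows exactly the paper's approach: the paper derives Theorem~\ref{2competitive} in a single sentence, observing that Lemma~\ref{witness_mst_edge} supplies witness sets of size~$2$ and then invoking Theorem~\ref{th:global_bound}. You simply spell out in more detail why \code{\tempone} fits the witness-algorithm template (the restart structure, the preservation of the open/trivial hypothesis across runs, and the verified-MST output in a non-restarting run), all of which the paper already records informally in the bullet points following Figure~\ref{fig:algo}.
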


We remark that the analysis of algorithm \code{\tempone}
actually works also in the more general setting
where it is only required that every area is
trivial or satisfies the following condition:
the area contains neither its infimum nor its
supremum.
It remains to show that under the restriction to open and trivial
areas there is no algorithm for the \code{mst-edge-uncertainty}
problem that is $(2-\epsilon)$-update competitive.

\begin{figure}[!ht] 
  \center{
    \scalebox{.30}{\includegraphics{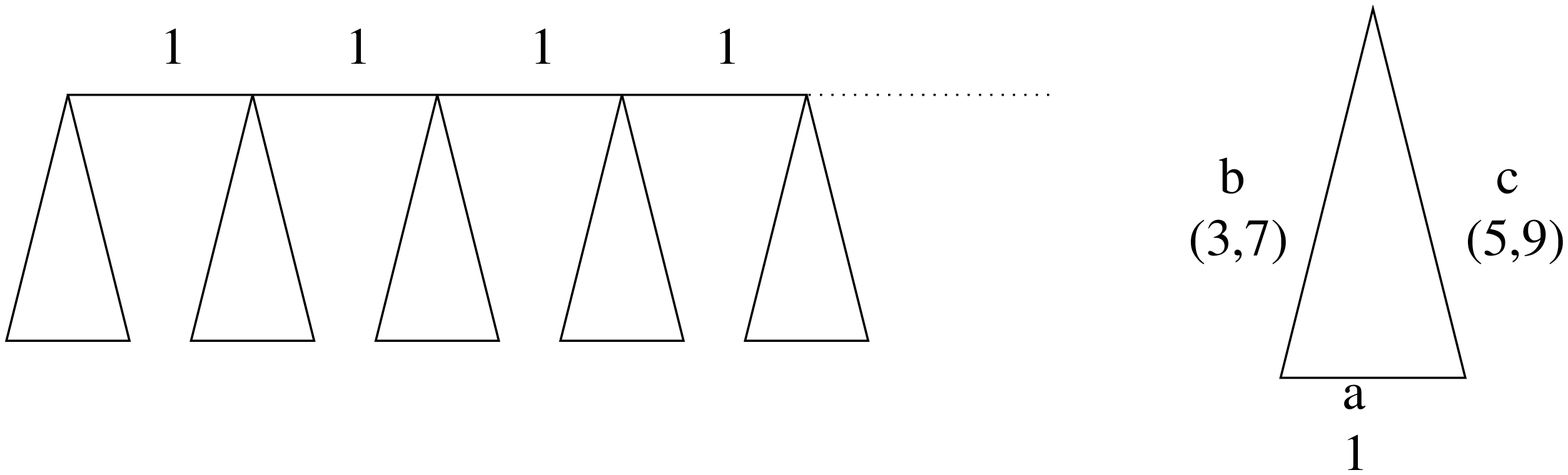}}
  }
  \caption{Lower bound construction}
  \label{lowerbound}
\end{figure}

\begin{example}
The graph $G$ displayed in Figure \ref{lowerbound} consists of
a path and, for each vertex of the path, a gadget connected to
that vertex.
Each gadget is a triangle with sides $a,b$ and $c$ and areas
$A_a=\{1\}$, $A_b= (3,7)$ and $A_c= (5,9)$. In each gadget  $a$
and either $b$ or $c$ are part of the minimum spanning tree.
If the algorithm updates $b$ we let the weight of $b$ be $6$. So
$c$ needs to be updated, which reveals a weight for $c$ of $8$. However,
by updating only $c$ the edge $b$ would be part of the minimum spanning
tree regardless of its exact weight.  If the algorithm
updates $c$ first, we let the weight of $c$ be $6$. The necessary update of
$b$ reveals a weight of $4$, and updating only $b$ would have been enough. So in
each gadget every algorithm makes two updates where only one is
needed by $\OPT$. Hence no deterministic algorithm is $(2-\epsilon)$-update competitive.
\end{example}

The following example shows that without restrictions on the areas there is no
algorithm for the \code{mst-edge-uncertainty} problem that is
constant update competitive. 

\begin{example}\label{ex:ne}
Figure \ref{badmst-both}(a) shows an example of an edge-uncertainty graph
for which no algorithm can be constant update
competitive. 
The minimum spanning tree consists of all edges incident with $u$ and all
edges incident with $v$ plus one more edge. Let us assume the weight of one of the
remaining $k=(n-2)/2$ edges is $2$ and the weight of the others is $3$. Any algorithm
would need to update these edges until it finds the edge with weight
$2$.
This in the worst case could be the last edge and $k$ updates were made. 
However $\OPT$ will only
update the edge with weight $2$ and therefore $\OPT=1$.%
\begin{figure}[tbh]
  \centering
    \scalebox{.5}{ \includegraphics{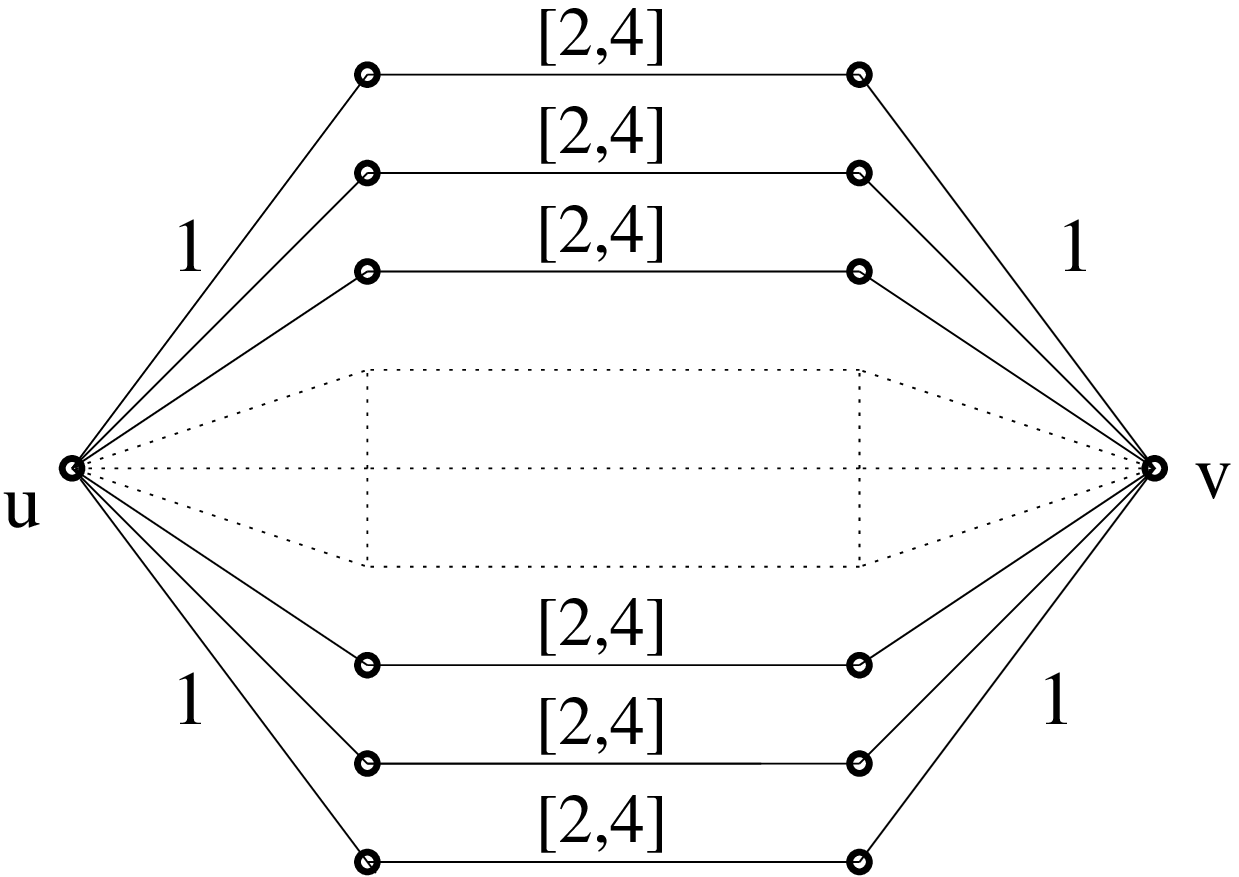}}
    \hspace{3em}
    \scalebox{.27}{ \includegraphics{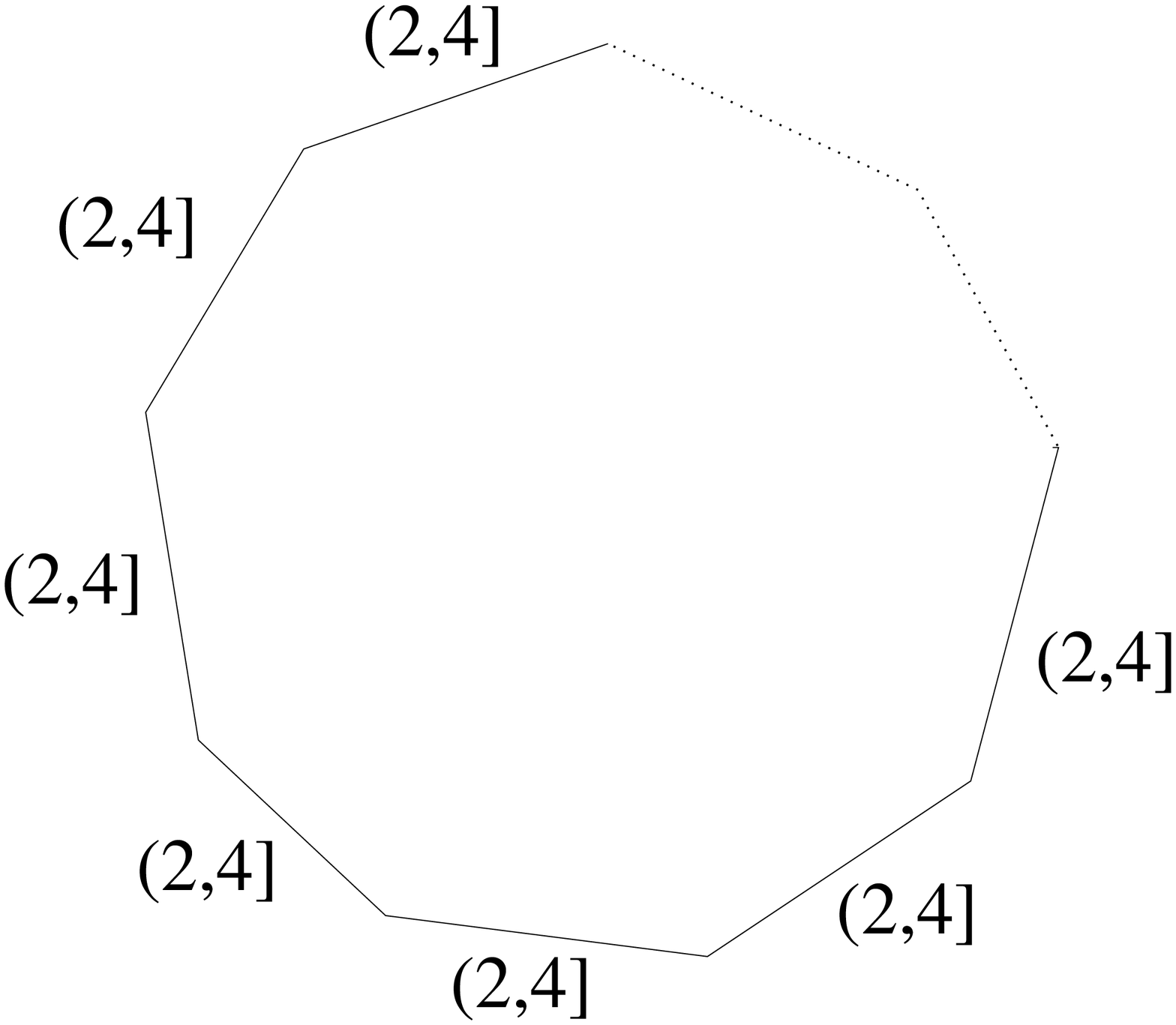}}
    \\
    \hspace*{0.7cm}(a)\hspace{7.3cm}(b)
  \caption{Non-existence of constant update competitive algorithms}
  \label{badmst-both}
\end{figure}

Note that this example actually shows that there is no algorithm that is
better then $(n-2)/2$-update competitive, where $n$ is the number of
vertices of the given graph. By adding edges with uncertainty area
$[2,4]$ such that the neighbors of $u$ and the neighbors
of $v$ form a complete bipartite graph, we even get a lower bound of
$\Omega(n^2)$.
\end{example}

The construction in Example~\ref{ex:ne} works also if the
intervals $[2,4]$ are replaced by half-open intervals
$[2,4)$. Thus, the example demonstrates that with closed lower
limits on the areas there is no constant update competitive
algorithm for the \code{mst-edge-uncertainty} problem.
The following example does the same for closed upper limits.

\begin{example}
The graph shown in Figure~\ref{badmst-both}(b) is one
big cycle with $k$ edges and the uncertainty area of each edge is
$(2,4]$. Let us assume exactly one edge $e$ has weight $4$ and the
others
are of weight $3$. 
In the worst case any algorithm has to update all
$k$ edges before
finding $e$. However $\OPT$ is $1$ by just updating $e$.
\end{example}


\vskip-0.5cm
\section{Minimum Spanning Tree with Vertex Uncertainty}
\label{sec:vertexu}%
In this section we consider the model of vertex-uncertainty
graphs. The models of vertex-uncertainty and edge-uncertainty are
closely related. 
Clearly a vertex uncertainty graph $\U$ has an associated
edge-uncertainty graph $\bar\U$ where the area for each edge
$e=\{u,v\}$ is determined by 
the combinations of possible locations of $u$ and $v$ in $\U$,
i.e., the areas $\bar A$ in $\bar\U$ are defined as
$\bar{A}_{\{u,v\}} = \{d(u',v') | u'\in
A_u, v' \in A_v\}$.

An update of an edge $e=\{u,v\}$ in $\bar\U$ can be performed (simulated)
by updating $u$ and $v$ in $\U$; these two vertex updates might also
reveal additional information about the weights of other edges incident
with $u$ or $v$.
Furthermore, note that if neither of the two vertices $u$ and $v$ in $\U$
is updated, no information about the weight of $e$ can be obtained.
Thus, we get:

\begin{lemma}
Let $\phi$ be a graph problem such that the set of solutions for
a given edge-weighted graph $G=(V,E)$ depends only on
the graph and the edge weights (but not the locations
of the vertices).
Let $\U$ be a vertex-uncertainty graph that is an instance of~$\phi$.
If $\bar W \subseteq E$ is a witness set for $\bar\U$,
then $W= \bigcup_{\{u,v\}\in\bar{W}}\{u,v\}$
is a witness set for $\U$.
\end{lemma}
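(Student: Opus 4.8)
The plan is to argue by contraposition: I would show that if $W = \bigcup_{\{u,v\}\in\bar W}\{u,v\}$ is \emph{not} a witness set for $\U$, then $\bar W$ is not a witness set for $\bar\U$. So suppose there is a set of vertex updates $S$ that avoids all of $W$ and yet, after performing them, the resulting vertex-uncertainty graph $\U'$ has a verifiable solution, i.e.\ $\phi$ is determined for every realization of $\U'$. The goal is to exhibit a set of edge updates in $\bar\U$ that avoids $\bar W$ and also verifies a solution.

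The natural candidate is $\bar S = \{\,e = \{x,y\} \in E : x \in S \text{ or } y \in S\,\}$, i.e.\ all edges incident to an updated vertex. First I would check that $\bar S$ avoids $\bar W$: if some $e = \{u,v\} \in \bar W$ were in $\bar S$, then one of its endpoints, say $u$, lies in $S$; but $u \in W$ by definition of $W$, contradicting that $S$ avoids $W$. Next I would check that updating $\bar S$ in $\bar\U$ yields at least as much information as updating $S$ in $\U$, in the sense relevant to verification. The key observation (already noted in the paragraph preceding the lemma) is that updating the two vertices of an edge reveals that edge's weight exactly; more generally, updating all vertices in $S$ reveals the weight of every edge both of whose endpoints are in $S$, and pins down no other edge weight beyond its uncertainty area $\bar A_e$. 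Thus the post-update edge-uncertainty graph $\bar\U'_S$ obtained by "simulating" $S$ has trivial areas exactly on the edges with both endpoints in $S$, and these are a subset of $\bar S$. Hence the edge-uncertainty graph $\bar\U'$ obtained by genuinely updating $\bar S$ in $\bar\U$ is at least as narrow as $\bar\U'_S$.

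Now I would tie this together using the hypothesis on $\phi$ and Lemma~\ref{lem:narrow}. Because the solution set of $\phi$ on an edge-weighted graph depends only on the graph and the edge weights, the realizations of $\U'$ and the realizations of $\bar\U'_S$ induce exactly the same family of edge-weight vectors, so $\phi$ being verifiable after updating $S$ in $\U$ is equivalent to $\phi$ being verifiable after the simulated update $\bar\U'_S$. Since $\bar\U'$ is at least as narrow as $\bar\U'_S$, every realization of $\bar\U'$ is a realization of $\bar\U'_S$, so a solution verifiable for $\bar\U'_S$ is a fortiori verifiable for $\bar\U'$. Therefore updating $\bar S$ in $\bar\U$ verifies a solution while avoiding $\bar W$, contradicting that $\bar W$ is a witness set. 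This proves $W$ is a witness set for $\U$.

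The main obstacle I anticipate is making the "same information" step fully rigorous — specifically, being careful that updating a vertex in $\U$ really does not constrain the weight of an edge unless \emph{both} endpoints are updated (which is exactly the second sentence of the paragraph preceding the lemma, so it may be invoked directly), and conversely that updating both endpoints pins the edge weight down to a single value, matching a trivial edge update in $\bar\U$. Once that correspondence between "updated vertex sets in $\U$" and "narrower edge-uncertainty graphs refining a simulated $\bar S$-update in $\bar\U$" is set up cleanly, the rest is a short application of Lemma~\ref{lem:narrow} together with the stated dependence of $\phi$ only on graph and edge weights.
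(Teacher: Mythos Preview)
Your plan follows exactly the line the paper takes: the paper gives no formal proof of this lemma at all, only the informal paragraph preceding it, and derives the statement directly from the two observations there --- that an edge update in $\bar\U$ can be simulated by updating both endpoints in $\U$, and that if neither endpoint of $e$ is updated then no information about $w_e$ is obtained. Your contrapositive argument, together with the appeal to Lemma~\ref{lem:narrow}, is precisely a formalization of that sketch.

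One step in your write-up does not hold as stated, however. You assert that ``the realizations of $\U'$ and the realizations of $\bar\U'_S$ induce exactly the same family of edge-weight vectors.'' Only one inclusion is true: every vertex placement consistent with $\U'$ yields an edge-weight vector that is a realization of $\bar{\U'}$ (and hence of the coarser $\bar\U'_S$), but a realization of $\bar\U'_S$ --- or even of $\bar{\U'}$ --- need not arise from any single consistent vertex placement, because pairwise Euclidean distances are correlated rather than independently chosen from the product of the edge areas. Consequently, ``$\sigma$ is verified for $\U'$'' does not in general imply ``$\sigma$ is verified for $\bar\U'_S$,'' which is the direction your contraposition needs. (Relatedly, your remark that updating one endpoint ``pins down no other edge weight beyond $\bar A_e$'' is not what the paper says: the paper's observation is only about edges with \emph{neither} endpoint updated.) The paper's one-line justification does not engage with this point either, so your argument is at the same level of rigor as the paper's; but the equality claim you make is stronger than what the two cited observations actually give you.
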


Using Theorem \ref{th:global_bound} we obtain the following result.

\begin{theorem}
\label{open2k}
Let $\phi$ be a graph problem such that the set of solutions for
a given edge-weighted graph depends only on
the graph and the edge weights (but not the locations
of the vertices). Let $A$ be a $k$-update
competitive algorithm for the problem $\phi$ with respect
to edge-uncertainty graphs. If $A$ is a witness
algorithm, then by simulating an edge update by updating both its endpoints
the algorithm $A$ is $2k$-update competitive for vertex-uncertainty graphs.
\end{theorem}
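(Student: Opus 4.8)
The plan is to combine the preceding lemma (that a witness set $\bar W$ for the edge-uncertainty graph $\bar\U$ yields a witness set $W=\bigcup_{\{u,v\}\in\bar W}\{u,v\}$ for $\U$) with Theorem~\ref{th:global_bound}, while being careful that we are now running the witness algorithm in the \emph{vertex}-uncertainty world rather than the edge-uncertainty world. First I would fix the algorithm in question: given the $k$-update competitive witness algorithm $A$ for edge-uncertainty graphs, define its simulated version $A'$ for vertex-uncertainty graphs by running $A$ on the associated edge-uncertainty graph $\bar\U$, but each time $A$ would update an edge $e=\{u,v\}$, instead update the two vertices $u$ and $v$. After these vertex updates we obtain a narrower vertex-uncertainty graph $\U'$, whose associated edge-uncertainty graph $\bar{\U'}$ is at least as narrow as the one $A$ itself would have produced (the two vertex updates reveal the weight of $e$ exactly, and possibly narrow other incident edges as well). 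So $A'$ is itself a witness algorithm for vertex-uncertainty graphs, where each ``witness set'' it uses is $W=\bigcup_{\{u,v\}\in\bar W}\{u,v\}$ for the witness set $\bar W$ that the underlying $A$ selects.

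Next I would bound the size of these witness sets. Since $A$ is $k$-update competitive as a witness algorithm, Theorem~\ref{th:global_bound} (read in the contrapositive direction, or rather the structural fact underlying it) tells us that every witness set $\bar W$ that $A$ ever uses has size at most $k$ — this is exactly the ``global bound $k$'' hypothesis, which for a witness algorithm is equivalent to being $k$-update competitive. Then each vertex witness set $W$ that $A'$ uses has $|W| \le 2|\bar W| \le 2k$, because each edge contributes at most two vertices. By the previous lemma each such $W$ is genuinely a witness set for the current vertex-uncertainty graph $\U$, so $A'$ is a correct witness algorithm for vertex-uncertainty graphs, all of whose witness sets have size at most $2k$. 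Applying Theorem~\ref{th:global_bound} once more, this time in the vertex-uncertainty setting, yields that $A'$ is $2k$-update competitive.

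The one subtlety to handle carefully — and the step I expect to be the main obstacle — is making precise that extra information gained during the simulation (the narrowing of edges incident to $u$ or $v$ beyond just $e$ itself) does not break the argument. This is where Lemma~\ref{lem:narrow} does the work: the edge-uncertainty graph seen by the simulated run after updating $u,v$ is narrower than the one a ``pure'' edge update would have produced, so any witness set valid for the latter is still valid for the former, and the correctness and termination of the underlying witness algorithm $A$ are preserved under being fed a narrower instance. I would also note that a vertex update in $\U$ always corresponds to \emph{some} update in $\bar\U$, so $\OPT$ for $\U$ is at least as large as would be needed to verify a solution in $\bar\U$ — but in fact we do not even need this direction, since the competitive bound follows purely from the witness-set-size bound via Theorem~\ref{th:global_bound}, and the lemma already certifies that the vertex sets we update are witnesses. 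Wrapping these observations together gives the $2k$ bound with no further calculation.
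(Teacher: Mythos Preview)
Your approach is exactly the paper's: combine the preceding lemma (edge witness sets lift to vertex witness sets) with Theorem~\ref{th:global_bound}. The paper's own proof is literally the single sentence ``Using Theorem~\ref{th:global_bound} we obtain the following result,'' so you have reconstructed the intended argument, and your handling of the narrowing issue via Lemma~\ref{lem:narrow} is correct and more careful than the paper.

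There is, however, one step that does not go through as written. You assert that ``for a witness algorithm [the global bound $k$ on witness-set size] is equivalent to being $k$-update competitive,'' invoking Theorem~\ref{th:global_bound} ``in the contrapositive direction.'' Theorem~\ref{th:global_bound} only gives the forward implication (size bound $\Rightarrow$ competitive ratio); the converse is neither stated nor true in general. A witness algorithm can be $k$-update competitive while occasionally using a witness set of size $k+1$: on an instance with $\OPT=2$ it could use witness sets of sizes $k+1$ and $k-1$, making $2k$ updates in total and meeting the $k$-competitive bound. So you cannot deduce $|\bar W|\le k$ from $k$-update competitiveness alone, and hence cannot conclude $|W|\le 2k$.

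The resolution is that the paper's hypothesis is slightly loose. In context, every $k$-update competitive witness algorithm in the paper is \emph{established} as such via Theorem~\ref{th:global_bound}, i.e., via a size-$k$ bound on its witness sets; the intended reading of ``$k$-update competitive witness algorithm'' here is precisely ``witness algorithm whose witness sets have size at most $k$.'' With that reading, your argument (and the paper's one-liner) is complete: the lemma gives vertex witness sets of size at most $2k$, and Theorem~\ref{th:global_bound} applied in the vertex-uncertainty setting yields the $2k$ bound. You should state this reading explicitly rather than claim an equivalence that does not hold.
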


By standard properties of Euclidean topology,
the following lemma clearly holds.

\begin{lemma}
\label{open2open}
Let $\U$ be a vertex uncertainty graph with only trivial or open
areas. Then $\bar\U$ also has only trivial or open areas.
\end{lemma}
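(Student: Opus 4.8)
The plan is to prove Lemma~\ref{open2open} by reducing it to a purely topological fact about the distance function on Euclidean space. Recall that $\bar{A}_{\{u,v\}} = \{d(u',v') \mid u' \in A_u,\ v' \in A_v\}$, so $\bar{A}_{\{u,v\}}$ is the image of the set $A_u \times A_v$ under the Euclidean distance map $d \colon \mathbb{R}^n \times \mathbb{R}^n \to \mathbb{R}_{\ge 0}$. First I would dispose of the trivial case: if both $A_u$ and $A_v$ are trivial, say $A_u = \{p\}$ and $A_v = \{q\}$, then $A_u \times A_v = \{(p,q)\}$ is a single point and its image under $d$ is the single value $\{d(p,q)\}$, so $\bar{A}_{\{u,v\}}$ is trivial. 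This matches the first alternative in the conclusion.

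Next I would handle the case in which at least one of $A_u$, $A_v$ is open (and non-trivial); here the claim is that $\bar{A}_{\{u,v\}}$ is open. The key observation is that $d$ is a continuous open map when restricted appropriately: more concretely, for a fixed point $p$, the map $q \mapsto d(p,q)$ is an open map from $\mathbb{R}^n \setminus \{p\}$ onto $(0,\infty)$, since it is surjective onto $(0,\infty)$, continuous, and sends open balls to open intervals (the image of an open ball $B(q_0,\varepsilon)$ contains the open interval $(d(p,q_0)-\varepsilon, d(p,q_0)+\varepsilon)$, by moving along the ray through $p$ and $q_0$). Thus if $A_u$ is open and $A_v = \{q\}$ is trivial, then $\bar{A}_{\{u,v\}} = \{d(u',q) \mid u' \in A_u\}$ is the image of the open set $A_u$ under an open map, hence open — provided $q \notin A_u$, which is safe to assume since the underlying graph has no self-loops and the locations of distinct vertices can be assumed distinct (or, if $q \in A_u$, the image is an interval of the form $[0,\beta)$ or $[0,\beta]$, which is still ``open or trivial'' only if we are careful; I would note that the paper's setting rules this out, or invoke the general-position-style assumption that areas of distinct vertices are disjoint). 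If both $A_u$ and $A_v$ are open, it suffices that one of them, say $A_u$, be open: fixing any $v' \in A_v$, the set $\{d(u',v') \mid u' \in A_u\}$ is already open by the above, and $\bar{A}_{\{u,v\}} = \bigcup_{v' \in A_v} \{d(u',v') \mid u' \in A_u\}$ is then a union of open sets, hence open.

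The cleanest way to organize all of this is to prove the single statement: if $S \subseteq \mathbb{R}^n \times \mathbb{R}^n$ is such that its projection onto at least one factor is open and does not contain the corresponding coordinate of any of its points (i.e.\ $u' \ne v'$ for all $(u',v') \in S$), then $d(S)$ is open; and if $S$ is a singleton then $d(S)$ is a singleton. Applying this with $S = A_u \times A_v$ gives the lemma. The main obstacle — such as it is — is the boundary case where the uncertainty areas of the two endpoints overlap and could force the value $0$ into $\bar{A}_{\{u,v\}}$, making it a half-open interval $[0,\beta)$; I would handle this by appealing to the implicit non-degeneracy assumption that distinct vertices occupy distinct locations (equivalently, $\overline{A_u} \cap \overline{A_v} = \emptyset$ for adjacent $u,v$, or at least $A_u \cap A_v = \emptyset$), which is consistent with the paper's stated desire to exclude degenerate inputs. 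With that caveat, the lemma follows immediately from the openness of the distance map, and this is indeed why the statement is flagged as clear by ``standard properties of Euclidean topology.''
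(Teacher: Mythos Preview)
The paper does not actually prove this lemma; it simply asserts that it ``clearly holds'' by ``standard properties of Euclidean topology.'' Your argument---that for each fixed $q$ the map $u'\mapsto d(u',q)$ is an open map from $\mathbb{R}^n\setminus\{q\}$ onto $(0,\infty)$, so the image of an open $A_u$ is open, and then taking the union over $v'\in A_v$---is precisely the standard fact being invoked, and your case split (both areas trivial versus at least one open) is the natural one. In that sense you have supplied the proof the paper omits.

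One correction worth making: you write that ``distinct vertices occupy distinct locations'' is equivalent to $\overline{A_u}\cap\overline{A_v}=\emptyset$ (or $A_u\cap A_v=\emptyset$). It is not: $p_u\neq p_v$ says nothing about whether the uncertainty areas overlap. What you actually need to rule out the half-open image $[0,\beta)$ is $A_u\cap A_v=\emptyset$, and you do eventually state this correctly. Note that the paper never makes this disjointness assumption explicit, so the edge case you flag is a small omission in the paper's statement rather than a gap in your argument.
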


\begin{theorem}
Under the restriction to trivial or open areas the algorithm \code{\tempone}
is $4$-update competitive for the \code{mst-vertex-uncertainty} problem, which
is optimal.
\end{theorem}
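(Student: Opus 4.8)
I would prove the two directions separately. The $4$-update competitiveness is essentially a corollary of the machinery already in place; the matching lower bound needs a new gadget, and that gadget is where the real work lies.

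\textbf{Upper bound.} First observe that the edge set of a minimum spanning tree of an edge-weighted graph is a function of the graph and its edge weights alone, so the MST problem satisfies the hypothesis of Theorem~\ref{open2k}. Observe also that algorithm \code{\tempone} is an instance of the witness-algorithm framework: whenever it cannot already compute an MST it updates the pair $\{f,g\}$ of lines~9--10 and restarts, and by Lemma~\ref{witness_mst_edge} that pair is a witness set under the open-or-trivial restriction (which is also why Theorem~\ref{th:global_bound} gives $k=2$). Now let $\U$ be a \code{mst-vertex-uncertainty} instance all of whose areas are open or trivial. By Lemma~\ref{open2open} the associated edge-uncertainty graph $\bar\U$ also has only open or trivial areas, so by Theorem~\ref{2competitive} algorithm \code{\tempone} is $2$-update competitive on $\bar\U$. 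Applying Theorem~\ref{open2k} with $k=2$ --- i.e.\ running \code{\tempone} on $\bar\U$ and realising each edge update by updating both of its endpoints --- yields that \code{\tempone} is $2\cdot 2 = 4$-update competitive for \code{mst-vertex-uncertainty} on such instances.

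\textbf{Lower bound.} For every $\varepsilon>0$ I would construct a family of \code{mst-vertex-uncertainty} instances with only open or trivial areas on which every deterministic algorithm makes more than $4\,\OPT - c$ updates for every fixed constant $c$. The instance is a disjoint collection of $N$ identical gadgets, tied into one connected graph by trivial, very light edges chosen so the gadgets cannot interact. Each gadget is a small configuration of vertices whose open uncertainty areas are tiny disks, arcs of circles, or segments, placed so that the induced pairwise Euclidean distances realise a prescribed pattern of weight intervals (circular arcs are the convenient device for pinning some distances to constants while keeping areas non-trivial and open). The gadget should be designed so that: (i) against an adaptive adversary, who reveals the exact location of a vertex only at the moment it is updated, no MST can be certified until the algorithm has updated four of the gadget's vertices; and (ii) for the configuration the adversary ultimately commits to, a single vertex update certifies the gadget's entire MST, so $\OPT$ is only $1$ per gadget. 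Summing over the $N$ gadgets gives algorithm cost at least $4N$ against $\OPT = N$, and letting $N\to\infty$ removes the additive constant and drives the ratio to $4$; optimality among deterministic algorithms then matches the upper bound.

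\textbf{Main obstacle.} The crux is the gadget, and specifically the tension between its two requirements. On the one hand the adversary must retain enough coordinate-wise freedom, after each of the algorithm's first three vertex updates, to keep every ``which of these two edges is the heavier one'' comparison in the gadget undecided --- this is an amplified geometric analogue of the triangle gadget used for the edge-uncertainty lower bound. On the other hand, once the adversary is forced to commit to a configuration, the uncertainty remaining in three of the gadget's vertices must provably be irrelevant to the MST given the fourth vertex's revealed location, so that $\OPT=1$ rather than $2$ or more. Arranging the Euclidean geometry so that a single revealed location dominates the outcome, while still leaving the adversary free to stall the algorithm for four updates, is the delicate balancing act; by comparison, checking that the areas one writes down are genuinely open (so that the non-existence results for degenerate inputs do not apply) is routine.
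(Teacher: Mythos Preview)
Your upper-bound argument is correct and identical to the paper's: Lemma~\ref{open2open} feeds Theorem~\ref{2competitive} into Theorem~\ref{open2k}, and \code{\tempone} is indeed a witness algorithm with witness sets of size~$2$.

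For the lower bound you have the right architecture --- independent gadgets, an adaptive adversary forcing four updates per gadget while $\OPT=1$ per gadget, then replicate --- and this is exactly the paper's structure. The gap is that you have not constructed the gadget: you list the properties it must have and explicitly flag the construction as the ``main obstacle'' without resolving it. Until the gadget exists the theorem is not proved, and your hint toward circular arcs for pinning distances suggests you are anticipating something more intricate than is needed.

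The paper's gadget is in fact quite clean. It uses exactly four non-trivial areas $A,B,C,D$, each a long thin horizontal open region of length~$2$, together with a supply of trivial points spaced at distance~$1$ that force every MST edge except one. The placements give $d(A,B)\in(7,11)$, $d(C,D)\in(4,8)$, and $d(A,D),d(B,C)>8$, so the only decision is whether $AB$ or $CD$ enters the MST. The adversary rule is uniform: on each of the algorithm's first three updates, place the revealed point at the far right of its area if it is $A$ or $D$, and at the far left if it is $B$ or $C$. A short case check shows that after any three such reveals the $AB$-versus-$CD$ comparison is still undecided; but the adversary then commits the remaining area to its \emph{opposite} extreme, and one verifies that updating that single area alone (with the other three still uncertain) already settles the comparison, so $\OPT=1$. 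No arcs, no pinned non-trivial distances --- the whole trick is the overlap of the ranges $(7,11)$ and $(4,8)$ and the left/right symmetry of the adversary's responses.
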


\begin{proof}
Combining Theorems \ref{2competitive} and \ref{open2k} together with
Lemma \ref{open2open}, we get that \code{\tempone} is $4$-update
competitive for the \code{mst-vertex-uncertainty} problem when restricted
to trivial or open areas. It remains to show that this is optimal.

\begin{figure}[!ht]
  \centering
    \scalebox{.45}{ \includegraphics{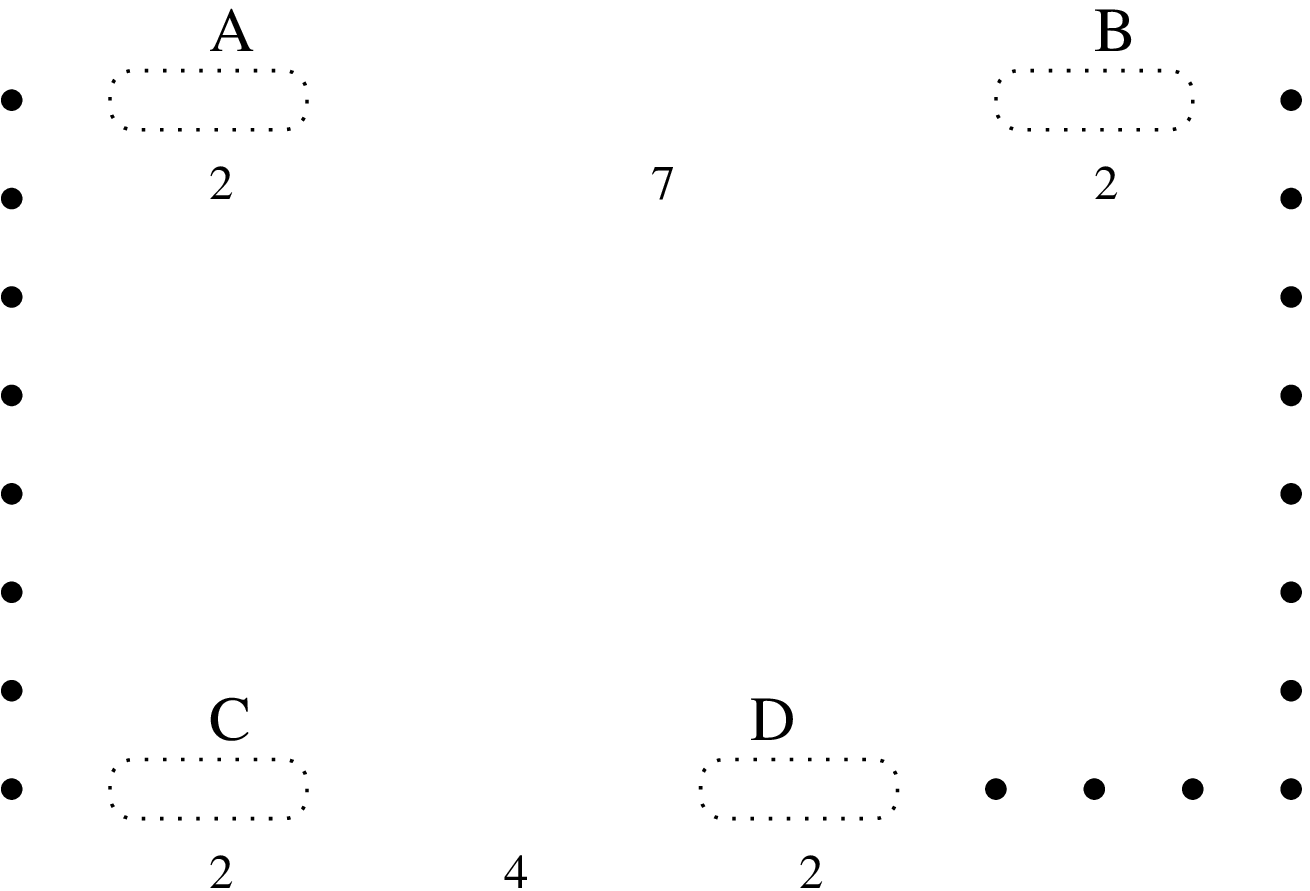}}
    \hspace{4em}
    \scalebox{.45}{ \includegraphics{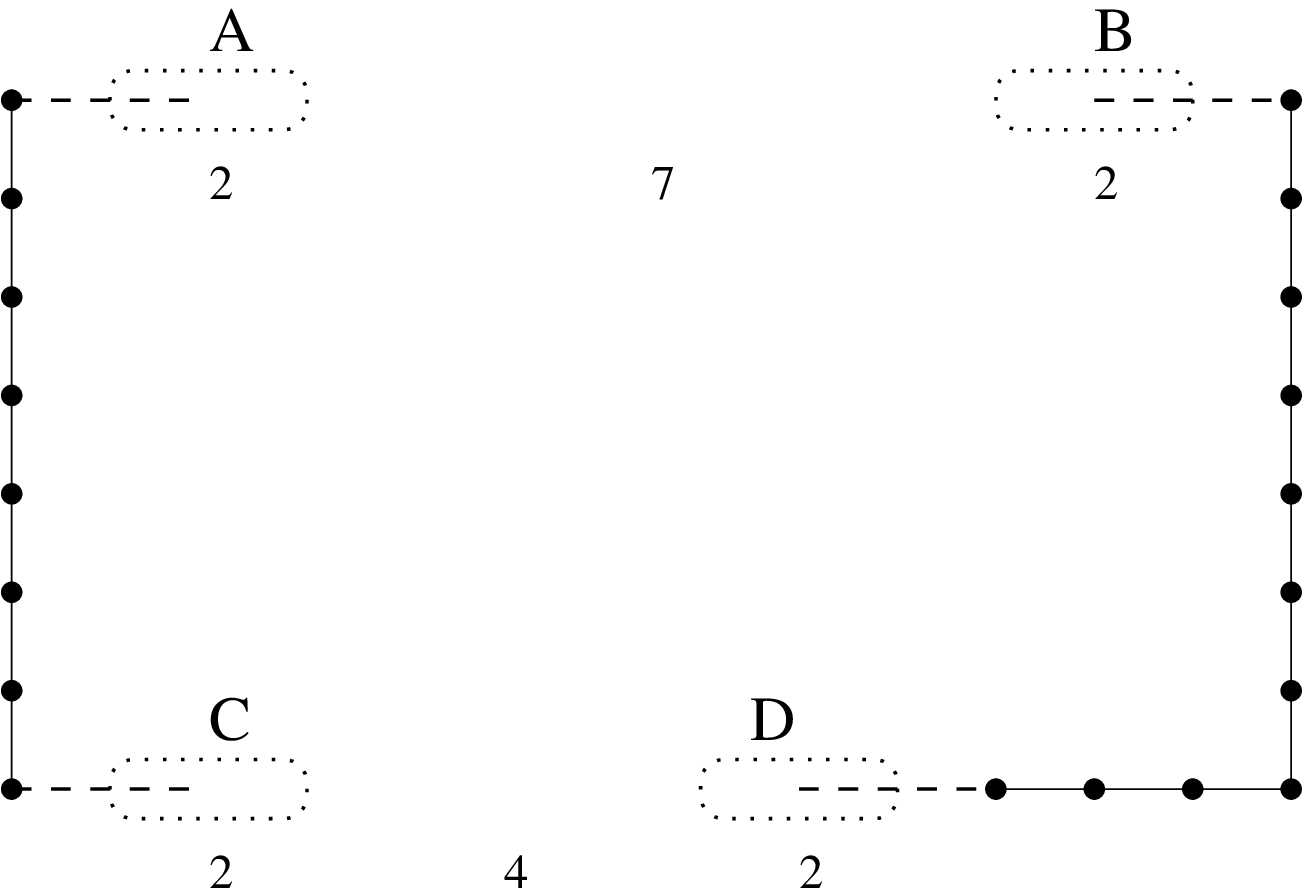}}\\
    (a)\hspace{7cm}(b)
    \vskip-0.2cm
  \caption{(a) Lower bound construction $\;$ (b)
  Edges that are in any minimum spanning tree}
  \label{fig:vertexBound}
\end{figure}%
\vskip-0.1cm
We show that no algorithm can be better than $4$-update
competitive.
In Figure~\ref{fig:vertexBound}(a) we give a construction in the Euclidean
plane for which any algorithm
can be forced to make 4 updates, while $\OPT$ is 1.
The black dots on the left and right represent trivial areas.
The distance between two neighboring trivial areas is~$1$.
There are four non-trivial areas $A, B, C$ and $D$. Each of these
areas is a long, thin open area of length $2$ and small positive
width. The distance between each non-trivial area and its
closest trivial area is~$1$ as well. Let $G$ be the complete graph
with one vertex for each of the trivial and non-trivial
areas.

Independent of the exact locations of the vertices in the non-trivial areas $A, B, C$
and $D$, the edges indicated in Figure~\ref{fig:vertexBound}(b) must be
part of any MST. Note that the distance between the
vertex of a non-trivial area and its trivial neighbor is in $(1,3)$ and
thus less than~$3$.

We now consider the distances between the non-trivial areas. We let
$d(X,Y)$ be the area of all possible distances between two vertex areas
$X$ and $Y$. So $d(A,B) = (7,11)$, $d(C,D) = (4,8)$. Note that the
distance between the vertices in $A$ and $D$ and the distance
between the vertices in $B$ and $C$ are greater than~$8$,
so either the edge $AB$ or the edge $CD$ is part of the minimum
spanning tree.

Every algorithm will update the areas $A, B, C$ and $D$ in a
certain order until it is clear that either the distance
between the vertices of $A$ and $B$ is smaller or equal to the
distance between the vertices of $C$ and $D$, or vice versa.
In order to force the algorithm to update all four areas, we
let the locations of the vertices revealed in any of the first 3
updates made by the algorithm be as follows:
\begin{itemize} 
\item $A$ or $D$: the vertex will be located far to the right,
\item $B$ or $C$: the vertex will be located far to the left.
\end{itemize}
Here, `far to the right' or `far to the left' means that
the location is very close (distance $\varepsilon>0$, for
some small $\varepsilon$) to the right or left end
of the area, respectively.

We show that it is impossible for the algorithm to
output a correct minimum spanning tree after only three
updates. Consider the situation after the algorithm has
updated three of the four non-trivial areas.
Since the choice of the locations of the vertices in the areas is
independent of the sequence of updates, we have to consider four
cases depending on which of the four areas has not yet been updated.
We use $A',B',C'$ and $D'$ to refer to the areas $A,B,C$ and $D$ after
they have been updated.
If the area $A$ is the only area that has not yet been updated, we have that
$d(A,B')=(7+\epsilon,9+\epsilon)$ and $d(C',D')=\{8-2\epsilon\}$. Clearly the area $A$
needs to be updated. By having the vertex of area $A$ on the far left,
updating only area $A$ instead of the areas $B,C,D$ results in
$d(A',B)=(9-\epsilon,11-\epsilon)$ and $d(C,D)=(4,8)$. Hence $\OPT$ would
only update the area $A$ and know that the edge $AB$ is not part of the
minimum spanning tree.
The other three cases are similar.
So for the construction in Figure~\ref{fig:vertexBound}(a), no algorithm can guarantee
to make less than $4$ updates even though a single update is enough for
the optimum. Furthermore, we can create $k$ disjoint copies of the construction
and connect them using lines of trivial areas spaced $1$ apart.
As long as the copies are sufficiently far
apart, they will not interfere with each other. Hence, for a graph with $k$
copies there is no algorithm that can guarantee less than $4k$ updates
when at the same time $\OPT=k$.
\end{proof}

\vskip-0.4cm

\end{document}